\newcommand{\pbgnew}{\textcolor{black}}
\newcommand{\pbgnewer}{\textcolor{black}}
\newcommand{\paragraphb}[1]{\medskip{} \noindent \textbf{#1}}
\newcommand{\paragraphbi}[1]{\medskip{} \noindent \textit{\textbf{#1}}}
\newcommand{\cut}[1]{}
\newcolumntype{L}[1]{>{\raggedright\let\newline\\\arraybackslash\hspace{0pt}}m{#1}}
\newcolumntype{C}[1]{>{\centering\let\newline\\\arraybackslash\hspace{0pt}}m{#1}}
\newcolumntype{R}[1]{>{\raggedleft\let\newline\\\arraybackslash\hspace{0pt}}m{#1}}
\newtheorem{theorem}{Theorem}
\newtheorem{claim}[theorem]{Claim}
\newcommand{\ie}{{\it i.e.,}\xspace}
\newcommand{\ajsnew}{\textcolor{black}}
\newcommand{\SC}{\textcolor{black}}
\newcommand{\fixme}[1]{\textcolor{red}{[\textbf{#1}]}}
\begin{document}
%



\title{Measuring and Understanding Throughput\\of Network Topologies }



\author{
\IEEEauthorblockN{Sangeetha Abdu Jyothi\IEEEauthorrefmark{1}, Ankit Singla\IEEEauthorrefmark{2}, P. Brighten Godfrey\IEEEauthorrefmark{1}, Alexandra Kolla\IEEEauthorrefmark{1}} \\
    \IEEEauthorblockA{\IEEEauthorrefmark{1}University of Illinois at Urbana--Champaign    }
    \IEEEauthorblockA{\IEEEauthorrefmark{2}ETH Zurich}

}


\maketitle

\begin{abstract}

High throughput is of particular interest in data center and HPC networks. Although myriad network topologies have been proposed, a broad head-to-head comparison across topologies and across traffic patterns is absent, and the right way to compare worst-case throughput performance is a subtle problem.

In this paper, we develop a framework to benchmark the throughput of network topologies, using a two-pronged approach. First, we study performance on a variety of synthetic and experimentally-measured traffic matrices (TMs).  Second, we show how to measure worst-case throughput by generating a near-worst-case TM for any given topology.  We apply the framework to study the performance of these TMs in a wide range of network topologies, revealing insights into the performance of topologies with scaling, robustness of performance across TMs, and the effect of scattered workload placement. Our evaluation code is freely available.
\end{abstract}

\section{Introduction}
\label{sec:intro}

Throughput is a fundamental property of communication networks: at what rate can data be carried across the network between desired end-points? Particularly for data centers and high performance computing, an increase in throughput demand among compute elements has reinvigorated research on network topology, and a large number of network topologies have been proposed in the past few years to achieve high capacity at low cost~\cite{fattree-new, vl2, BCube, portland, dcell, cthrough, helios, proteus, jellyfish, legup, rewire, Xpander}.

However, there is little order to this large and ever-growing set of network topologies. We lack a broad comparison of topologies, and there is no open, public framework available for testing and comparing topology designs.  The absence of a well-specified benchmark complicates research on network design, making it difficult to evaluate a new design against the numerous past proposals, and difficult for industry to know which threads of research are most promising to adopt.

Our goal is to build a framework for accurate and consistent measurement of the throughput of network topologies, and use this framework to benchmark proposed data center and HPC topologies.

To accomplish this, we need metrics for comparison of throughput, and this turns out to be a subtle problem. Throughput can be measured by testing particular workloads, or traffic matrices (TMs), but the immediate question is what TMs to test.  One approach is to test a variety of common TMs, which can provide insight into the effect of topological structure for particular use cases reflected by those specific TMs.  However, we argue it is useful to go beyond this. In HPC and data center networks, TMs may vary widely depending on the use-case and across time as applications spin up and down or migrate~\cite{vl2, hpcContention, hpcQos, netNoise, hpcInterferenceAvoidance}. Although some applications map well onto certain topologies and known worst-case traffic patterns~\cite{tornado, dragonflyStencil} can be avoided in such cases, a mix of multiple applications could still produce an unintended difficult TM. In fact, \cite{hpcContention} observes that network contention among applications sharing an HPC system will worsen in the near future (even after accounting for the expected increase in network speeds) as HPC applications grow in size and variety. Hence, it is useful to understand the \textit{worst-case} performance of a topology, for any TM.  However, currently, there does not exist a systematic way to evaluate the worst-case throughput achievable in a network and to identify the traffic pattern responsible for it.

Our key contributions, then, are to (1) develop a heuristic to measure worst-case throughput, and (2) provide an expansive benchmarking of a variety of topologies using a variety of TMs -- TMs generated from real-world measurements, synthetic TMs, and finally our new near-worst-case TMs.  We discuss each of these in more detail.

(1a) \textbf{We evaluate whether cut-based metrics}, such as bisection bandwidth and sparsest cut, solve the problem of estimating worst-case throughput. A number of studies (e.g.~\cite{rewire,ANCS, webb2011topology, stephens2012past, hyperx}) employ cut metrics. It has been noted~\cite{LFTI} that bisection bandwidth does not always predict average-case throughput (in a limited setting; \S\ref{sec:related}).  But do cuts measure \emph{worst-case} throughput?  We show that it does not, by proving the existence of two families of networks, $A$ and $B$, where $A$ has a higher cut-metric even though $B$ supports \emph{asymptotically} higher worst-case throughput.  Further, we show that the mismatch between cuts and worst-case throughput exists even for highly-structured networks of small size -- a 5-ary 3-stage butterfly with only 25 nodes -- where the sparsest-cut found through brute force computation is strictly greater than the worst-case throughput.

(1b) Since cut metrics don't achieve our goal, \textbf{we develop a heuristic to measure worst-case throughput}. We propose an efficient algorithm to generate a near-worst-case TM for any given topology. We show empirically that these near-worst-case TMs approach a theoretical lower bound of throughput in practice. Note that Kodialam et. al.~\cite{kodialam} previously developed a TM for a somewhat different reason\footnote{Upper-bounding throughput performance of a routing scheme using an LP formulation. In addition, \cite{kodialam} did not use their TM to benchmark topologies.} which could be re-purposed as a near-worst-case TM.  \pbgnewer{Compared with~\cite{kodialam}, our methodology finds TMs that are just as close to the worst case, can be computed approximately $6\times$ faster, and scales to networks $8\times$ larger with the same memory limit.}

(2) We perform a \textbf{head-to-head benchmark of a wide range of topologies} across a variety of workloads.  Specifically, we evaluate a set of $10$ topologies proposed for data centers and high performance computing, with our near-worst-case TMs, a selection of synthetic TMs, and real TMs measured from two operational Facebook data centers.  Key findings of this study include:

\begin{itemize}[leftmargin=*]
	\item \textbf{Which topologies perform well?} We find the answer depends on scale.  At small scale, several topologies, particularly DCell, provide excellent performance across nearly all TMs.  However, at $\geq 1000$s of servers, proposals based on expander graphs (Jellyfish, Long Hop, Slim Fly) provide the best performance, with the advantage increasing with scale (an observation confirmed by a recent data center proposal~\cite{Xpander}).
	
	These results add new information compared with past work.  We find the three expanders have nearly identical performance for uniform traffic (e.g. all-to-all); this differs from conclusions of~\cite{ANCS,slim-fly} which analyzed bisection bandwidth rather than throughput.  We find that expanders significantly outperform fat trees; this differs from the results of~\cite{HPC13} which concluded that fat trees perform as well or better.  We replicate the result of~\cite{HPC13} and show the discrepancy results from an inaccurate calculation of throughput and unequal topology sizes. 
	
	\item \textbf{What topologies have robustly high performance even with worst-case traffic?} The effect of near-worst-case traffic differs. We find that certain topologies (Dragonfly, BCube) are less robust, dropping more substantially in performance. Jellyfish and Long Hop are most robust.
	
	While fat trees are otherwise generally robust, we show that \pbgnewer{under a TM with a small fraction of large flows in the network}, fat trees perform poorly. We find this is due to lack of path diversity near the endpoints.
	
	\item \textbf{How do real-world workloads perform?} \pbgnewer{We experiment with real world TMs from two different clusters based on data from Facebook~\cite{FBTM}. Surprisingly, in one very nonuniform TM} we find that randomizing rack-level workload placement consistently \emph{improves} performance for all networks we tested other than fat trees and the expanders. The fact that expanders and fat trees perform well even without this workload rearrangement is consistent with our finding that they are robust to worst-case performance.  But also, for the other topologies, it indicates an intriguing opportunity for improving performance through better workload placement.
		
\end{itemize}

To the best of our knowledge, this work is the most expansive comparison of network topology throughput to date, and the only one to use accurate and consistent methods.  Our results provide insights that are useful in topology selection and workload placement.  But just as importantly, our evaluation framework and the set of topologies tested are freely available~\cite{tool}. We hope our tools will facilitate future work on rigorously designing and evaluating networks, and replication of research results.

\cut{
Beyond the importance of understanding and modeling capacity, these questions have immense practical relevance for the design of computer communication networks.  The combination of two inexorable trends---increasing parallelism~\cite{} and increasing analytics of big data which is transferred between processes~\cite{}---mean computing systems now urgently need efficient high-capacity network interconnects.  Modern warehouse-scale computing facilities called \emph{data centers}, operated by large Internet services like Google, Amazon, and Facebook, require high-capacity networks connecting tens of thousands of servers.  As cloud computing and big data analytics tasks have demanded greater communication among servers in recent years, data center designs have turned to structured HPC-style topologies as a best practice. \fixme{need to say they lack the constraints of HPC  ... unencumbered by the particular systems constraints of HPC} \fixme{Mention on-chip networks too as another example, so we have small scale and large scale}


In this work we ...

-----

In spite of the ubiquity of networks, and a long history of careful
design of large networks,\cut{ going at least as far back as the telegraph networks of the $19^{th}$
century \fixme{Include that telegraph quote?},} this fundamental question is 
unresolved. The primary theses of this work are two-fold: (a) Surprisingly, 
`careful' design is counter-productive -- \emph{random} networks achieve greater carrying 
capacity at lower expense than all the prior carefully designed networks we have been
able to investigate; and (b) Even more surprisingly, not only do random networks beat 
careful design, they come close to an upper bound on network efficiency.

 \cut{This is, perhaps, because network design problems of the past have been constrained
in multiple ways that do not fit precisely the abstract question posed above. The telegraph,
for instance, was built to connect the major population centers, thus 
depending heavily on geography. The Internet, apart from similar geographical
constraints, also depends on complex market interactions between regional
operators (ISPs). Networks that are designed for connecting multiple
computing cores on chips are constrained by difficulties of laying out 
connections on the chip surface, as well as the need to preserve simplicity
in the routing logic that would find paths between the cores through the 
interconnect.}

We must emphasize that the abstract network design problem we address is
not without real world applications. One setting, more so than others, is
particularly promising: \emph{data center networking}. 

While the Internet is certainly the most visible computer network, much innovation in network design is now 
taking place in the warehouse scale computing facilities called \emph{data centers},
which large Internet services like Google, Amazon, and Facebook run to be able
to serve their content. Data center networks connect tens of thousands of 
servers that work together to meet service objectives. An inefficient
network translates into servers being bottlenecked on communication, resulting
in lower application performance.

The data center environment allows a fresh, unencumbered perspective on questions of
network design. In the data center, one operator (such as Google, or Amazon)
has control over a large network where the entities being connected, \ie servers,
are in close physical proximity to each other. Thus, the constraints on geography
and distance which affect the design of wide-area networks (like the Internet) can
be ignored. In this setting, (to some approximation) we seek to answer the
following question: Given specific networking equipment (\ie a fixed number of 
network switches each with a fixed number of network connections they can make),
what specific network interconnect has the highest capacity?

In recent times, the high-performance computing community and data center
researchers have proposed several network designs targeted at achieving
high capacity networking at low cost. In our previous work~\cite{jellyfish},
we show that random graphs compare favorably against two such proposals.
In this work, we extend these results in the following ways: (a) Using both
experiments and theory, we make a case for network efficiency to be measured
in terms of the amount of traffic servers can exchange among themselves
rather than using indirect or proxy metrics such as graph cuts; (b) We compare random
graphs against a wide set of network topologies; (c) We include new theoretical
insights into why random graphs perform well; and (d) We show that random
graphs are close to an upper bound on the performance of \emph{any} (hypothetical)
network that could possibly be built under the same cost constraints. 

Our conclusions are interesting and surprising in many ways. First,
the metrics some past research has focused on for the purpose of network design
are unsuitable proxies of network's performance. In addition, these metrics
are not even computable for large network sizes, while the direct throughput
metric is. Second, random graphs perform better than `intelligently designed' 
networks. Third, random graphs are close to the upper bound on a topology's 
throughput efficiency. Thus, our results make significant progress towards 
closing the gap in our understanding of efficient network design.
}

\section{Metrics for throughput}
\label{sec:Metrics}

\pbgnewer{In this section, we define throughput precisely.  We then study how to evaluate worst-case throughput: we consider cut-metrics, before presenting a heuristic algorithm that produces a near-worst-case traffic matrix.}

\subsection{Throughput defined}
\label{sec:throughput-defined}

Our focus in this paper is on network throughput. Furthermore, our focus is on \emph{network topology}, not on higher-level design like routing and congestion control.\footnote{There are many performance measures for networks that are not considered here, and higher-layer systems obviously affect throughput experienced by applications.  But our goal is a comprehensive understanding of the fundamental goal of high-throughput network topology. At this stage, systems-level issues would be a distraction.} Therefore, the metric of interest is end-to-end throughput supported by a network in a fluid-flow model with optimal routing. We next define this more precisely.

A \textbf{network} is a graph $G=(V,E)$ with capacities $c(u,v)$ for every edge $(u,v)\in E_G$.   Among the nodes $V$ are \textbf{servers}, which send and receive traffic flows, connected through non-terminal nodes called \textbf{switches}. Each server is connected to one switch, and each switch is connected to zero or more servers, and other switches.  Unless otherwise specified, for switch-to-switch edges $(u,v)$, we set $c(u,v)=1$, while server-to-switch links have infinite capacity.  This allows us to stress-test the network topology itself, rather than the servers.


A \textbf{traffic matrix (TM)} $T$ defines the traffic demand: for any two servers $v$ and $w$, $T(v,w)$ is an amount of requested flow from $v$ to $w$. We assume without loss of generality that the traffic matrix is normalized so that it conforms to the ``hose model'': each server sends \pbgnew{at most $1$ unit of traffic and receives at most $1$ unit of traffic ($\forall v$, $\sum_w T(v,w) \leq 1$ and $\sum_w T(w,v) \leq 1$)}.

The \textbf{throughput} of a network $G$ with TM $T$ is the maximum value $t$ for which $T\cdot t$ is feasible in $G$. That is, we seek the maximum $t$ for which there exists a feasible multicommodity flow that routes flow $T(v,w)\cdot t$ through the network from each $v$ to each $w$, subject to the link capacity and the flow conservation constraints. This can be formulated in a standard way as a linear program (omitted for brevity) and is thus computable in polynomial time. If the nonzero traffic demands $T(v,w)$ are equal, this is equivalent to the \emph{maximum concurrent flow}~\cite{maxconcflow} problem:
maximizing the minimum throughput of any requested end-to-end flow.


Note that we could alternately maximize the \emph{total} throughput of all flows.  We avoid this because it would allow the network to pick and choose among the TM's flows, giving high bandwidth only to the ``easy'' ones (e.g., short flows that do not cross a bottleneck).  The formulation above ensures that \emph{all} flows in the given TM can be satisfied at the desired rates specified by the TM, all scaled by a constant factor.

We now have a precise definition of throughput, but it depends on the choice of TM.  How can we evaluate a \emph{topology} independent of assumptions about the traffic, if we are interested in worst-case traffic? 

\subsection{Cuts: a weak substitute for worst-case throughput}
\label{subsec:wrongMetric}
\label{sec:sparsest-cut-defined}
Cuts are generally used as proxies to estimate throughput. Since any cut in the graph upper-bounds the flow across the cut, if we find the minimum cut, we can bound the worst-case performance. Two commonly used cut metrics are:

\textit{{(a) Bisection bandwidth:} }It is a widely used to provide an evaluation of a topology's performance independent of a specific TM. It is the capacity of the worst-case cut that divides the network into two equal halves (\cite{padua2011encyclopedia}, p. 974).

\textit {(b) Sparsest cut:} \pbgnewer{The sparsity of a cut is the ratio of its capacity to the net weight of flows that traverse the cut, where the flows depend on a particular TM. Sparsest cut refers to the minimum sparsity in the network.  The special case of \emph{uniform sparsest cut} assumes the all-to-all TM.}

\pbgnewer{Cuts provide an upper-bound on worst-case network performance, are simple to state, and can sometimes be calculated with a formula.  However, they have several limitations.}

\paragraphb{\pbgnewer{(1)} Sparsest cut and bisection bandwidth are not actually TM-independent}, contrary to the original goal of evaluating a topology independent of traffic.  Bisection bandwidth and the uniform sparsest cut correspond to the worst cuts for the complete (all-to-all) TM, so they have a hidden implicit assumption of this particular TM.

\paragraphb{\pbgnewer{(2)} Even for a specific TM, computing cuts is NP-hard}, and it is believed that there is no efficient constant factor approximation algorithm~\cite{chawla2006hardness, bisectionNPComplete}. In contrast, throughput is computable in polynomial time for any specified TM. 

\paragraphb{(3) Cuts are only a \emph{loose} upper-bound for \pbgnewer{worst-case} throughput.}  This may be counter-intuitive if our intuition is guided by the well-known max-flow min-cut theorem which states that in a network with a single flow, the maximum achievable flow is equal to the minimum capacity over all cuts separating the source and the destination~\cite{max-min1,max-min2}. \pbgnew{However, this no longer holds when} there are more than two flows in the network, i.e., \emph{multi-commodity flow}: the maximum flow (throughput) can be an $O(\log n)$ factor lower than the sparsest cut~\cite{max-min-bound}. Hence, cuts do not directly capture the maximum flow.

\begin{figure}
\centering
\includegraphics[width=3in]{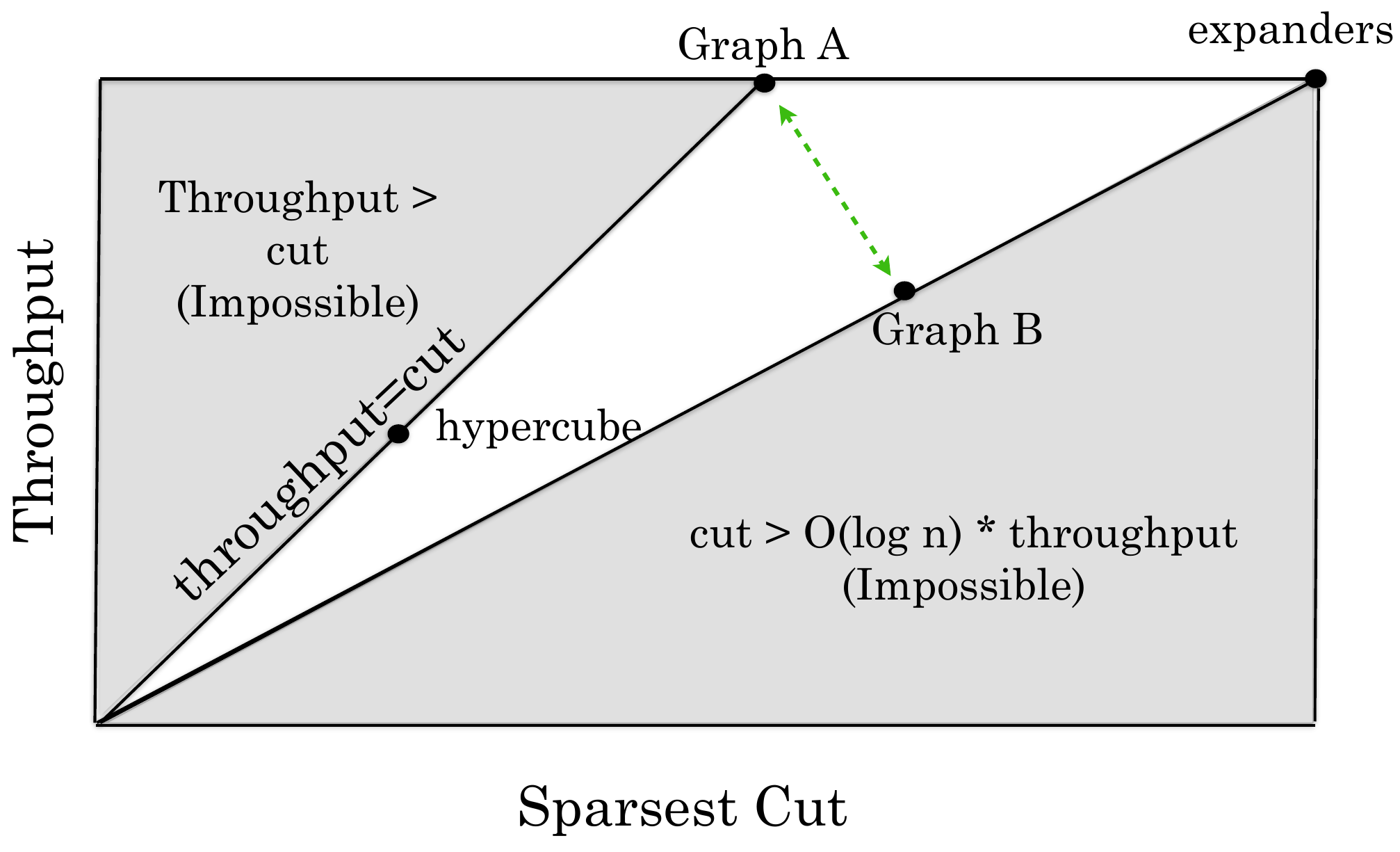}
\caption{Sparsest cut vs. Throughput }
\label{fig:cutVs}
\end{figure}

\pbgnewer{Figure~\ref{fig:cutVs} depicts this relationship between cuts and throughput.  Here we focus on sparsest cut.}\footnote{\pbgnewer{We pick one for simplicity, and sparsest cut has an advantage in robustness.  Bisection bandwidth is forced to cut the network in equal halves, so it can miss more constrained bottlenecks that cut a smaller fraction of the network.}} The flow (throughput) in the network cannot exceed the upper bound imposed by the worst-case cut. On the other hand, the cut cannot be more than a factor $O(\log n)$ greater than the flow~\cite{max-min-bound}. Thus, any graph and an associated TM can be represented by a unique point in the region bounded by these limits.

\SC{While this distinction is well-established~\cite{max-min-bound}, we strengthen the point by showing} that \emph{it can lead to incorrect decisions when evaluating networks}.  Specifically, we will exhibit a pair of graphs $A$ and $B$ such that, as shown in Figure~\ref{fig:cutVs}, $A$ has higher throughput  but $B$ has higher sparsest cut.  If sparsest cut is the metric used to choose a network, graph $B$ will be wrongly assessed to have better performance than graph $A$, while in fact it has a factor $\Omega(\sqrt{\log n})$ worse performance! 

\textbf{Graph A} is a clustered random graph adapted from previous work~\cite{singla14throughput} \pbgnew{with $n$} nodes and degree $2d$. $A$ is composed of two equal-sized clusters with $n/2$ nodes each. Each node in a cluster is connected by degree $\alpha$ to nodes inside its cluster, and degree $\beta$ to nodes in the other cluster, such that $\alpha + \beta = 2d$. $A$ is sampled uniformly at random from the space of all graphs satisfying these constraints. We can pick $\alpha$ and $\beta$ such that $\beta = \Theta(\frac{\alpha}{\log n})$. Then, as per \cite{singla14throughput} (Lemma~$3$), the throughput of $A$ \SC{(denoted $T_A$)} and its sparsest cut \SC{(denoted $\Phi_A$)} are both $\Theta(\frac{1}{n\log n})$.

Let graph $G$ be any $2d$-regular expander on $N=\frac{n}{dp}$ nodes, where $d$ is a constant and $p$ is a parameter we shall adjust later. \textbf{Graph B} is constructed by replacing each edge of $G$ with a path of length $p$. It is easy to see that $B$ has $n$ nodes. We prove in Appendix A the following theorem.

\begin{theorem}\label{thm:graphB}
$T_B = O(\frac{1}{np{\log n}})$ and $\Phi_B=\Omega(\frac{1}{np})$. 
\end{theorem}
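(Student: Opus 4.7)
The plan is to bound the two quantities independently: the throughput upper bound by a routine flow-times-distance accounting, and the sparsest cut lower bound by a case analysis that leverages the edge expansion of $G$.

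\textbf{Throughput.} First I would count edges: $G$ has $N = n/(dp)$ vertices and $Nd$ edges, so $|E(B)| = Ndp = n$. Second, since $G$ is a constant-degree expander on $N$ vertices, its average pairwise distance is $\Theta(\log N) = \Theta(\log n)$, and subdividing every edge into a path of length $p$ multiplies every distance by exactly $p$, so $\bar d(B) = \Theta(p\log n)$. Third, for the uniform all-pairs traffic matrix, the standard ``total flow $\cdot$ length $\le$ total capacity'' inequality $t \cdot \sum_{v\ne w} d_B(v,w) \le |E(B)|$ gives $t \cdot \Theta(n^2 p \log n) \le \Theta(n)$, i.e., $t = O(1/(np\log n))$.

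\textbf{Sparsest cut.} Fix any cut $(S,\bar S)$ of $B$ with $|S| \le |\bar S|$; since $|\bar S| = \Theta(n)$, it suffices to prove $c(S,\bar S) = \Omega(|S|/p)$. Write $S_0 = S \cap V(G)$ and $\bar S_0 = \bar S \cap V(G)$. If the induced $G$-cut is ``large,'' $e_G(S_0, \bar S_0) \ge C|S|/p$ for a suitable constant $C$, then each $G$-edge crossing this cut contributes at least one $B$-edge to $c(S,\bar S)$ along its subdivided path, so $c(S,\bar S) \ge e_G(S_0, \bar S_0) = \Omega(|S|/p)$ and we are done. Otherwise, the edge-expansion guarantee $e_G(X, \bar X) = \Omega(d\min(|X|,|\bar X|))$ for a $2d$-regular constant-gap expander forces $\min(|S_0|, |\bar S_0|) = O(|S|/p)$. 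By symmetry assume $|S_0| = O(|S|/p)$. Then $\Omega(|S|)$ subdivision nodes lie in $S$, and since each subdivided path has only $p-1$ interior nodes, at least $\Omega(|S|/p)$ distinct paths contain a node of $S$. At most $2d|S_0| = O(|S|/p)$ paths touch $S_0$, so with $C$ chosen small enough, $\Omega(|S|/p)$ of these paths have both endpoints in $\bar S_0$ yet host an interior node of $S$; each must cross the cut at least twice, giving $c(S,\bar S) = \Omega(|S|/p)$. The dual sub-case $|\bar S_0| = O(|S|/p)$ is identical after renaming.

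\textbf{Main obstacle.} The delicate step is Case 2: the cutoff $C$, the expansion constant of $G$, and the degree $d$ must be chosen so that the count of ``paths in $\bar S_0 \times \bar S_0$ with an interior in $S$'' remains $\Omega(|S|/p)$ after subtracting the $O(d|S_0|)$ paths incident to $S_0$. A secondary bookkeeping issue is that $B$ has $N(dp - d + 1)$ vertices and $Ndp$ edges rather than exactly $n$; these discrepancies vanish into the $\Theta$ notation for fixed $d$ and $p\ge 2$.
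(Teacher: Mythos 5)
Your proposal is correct. The throughput half is essentially the paper's argument: both instantiate the volumetric bound (total flow times distance cannot exceed total capacity), the paper obtaining it by plugging the shortest-path metric into the LP dual of maximum concurrent flow, you by stating the inequality $t\cdot\sum_{v\neq w}d_B(v,w)\leq |E(B)|$ directly; the numerology ($|E(B)|=n$, average distance $\Theta(p\log n)$) is identical. The cut half, however, takes a genuinely different route. The paper maps the sparsest cut $(S_1,S_2)$ of $B$ to a cut $(S_1',S_2')$ of $G$, asserts $|S_i|\leq p|S_i'|$ to get $\Phi_B\geq \Phi_G/p^2$, and then invokes the expander's sparsest-cut bound $\Phi_G=\Omega(1/N)$. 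That correspondence is left informal: it does not address cuts in which one side consists largely (or entirely) of subdivision vertices sitting on paths whose two $G$-endpoints lie on the other side, in which case $S_i'$ can be empty and the chain of inequalities degenerates. Your case analysis handles exactly this configuration: either the induced $G$-cut $e_G(S_0,\bar S_0)$ is already $\Omega(|S|/p)$ and each crossing $G$-edge donates a distinct cut edge of $B$, or edge expansion forces one of $S_0,\bar S_0$ to be $O(|S|/p)$, whence $\Omega(|S|/p)$ subdivided paths have both endpoints on one side but an interior vertex on the other and must each cross the cut twice. The price is the constant bookkeeping you flag (choosing the threshold $C$ small relative to the expansion constant and $1/d$), which does go through as you describe, including the renamed sub-case where $|\bar S_0|$ is small (there one even gets $c(S,\bar S)=\Omega(n/p)$). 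The net effect is the same $\Omega(1/(np))$ bound, obtained from edge expansion directly rather than from the sparsest cut of $G$, and with the degenerate cuts treated explicitly rather than implicitly.
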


In the above, setting $p=1$ corresponds to the `expanders' point in Figure~\ref{fig:cutVs}: both $A$ and $B$ have the same throughput (within constant factors), but the $B$'s cut is larger by $O(\log n)$. Increasing $p$ creates an asymptotic separation in both the cut and the throughput such that $\Phi_A < \Phi_B$, while $T_A > T_B$.

\paragraphb{Intuition.} The reason that throughput may be smaller than sparsest cut is that in addition to being limited by bottlenecks, the network is limited by the total volume of ``work'' it has to accomplish within its total link capacity.  That is, if the TM has equal-weight flows,
\[ 
Throughput\: per \: flow \leq \frac{Total\: link\: capacity}{\#\: of\: flows\: \cdot \: Avg\: path\: length}
\]
where the total capacity is $\sum_{(i,j) \in E} c(i,j)$ and the average path length is computed over the flows in the TM.  This ``volumetric'' upper bound may be tighter than a cut-based bound. 

\cut{
\begin{figure*}
\centering
\begin{minipage}[b]{0.45\linewidth}
	\includegraphics[width=3.1in, height=1.75in]{figures/BGraph.pdf}
	\centering
	\small Graph A - Fat tree
\end{minipage}
\begin{minipage}[b]{0.4\linewidth}
	\includegraphics[width=3.1in, height=1.75in]{figures/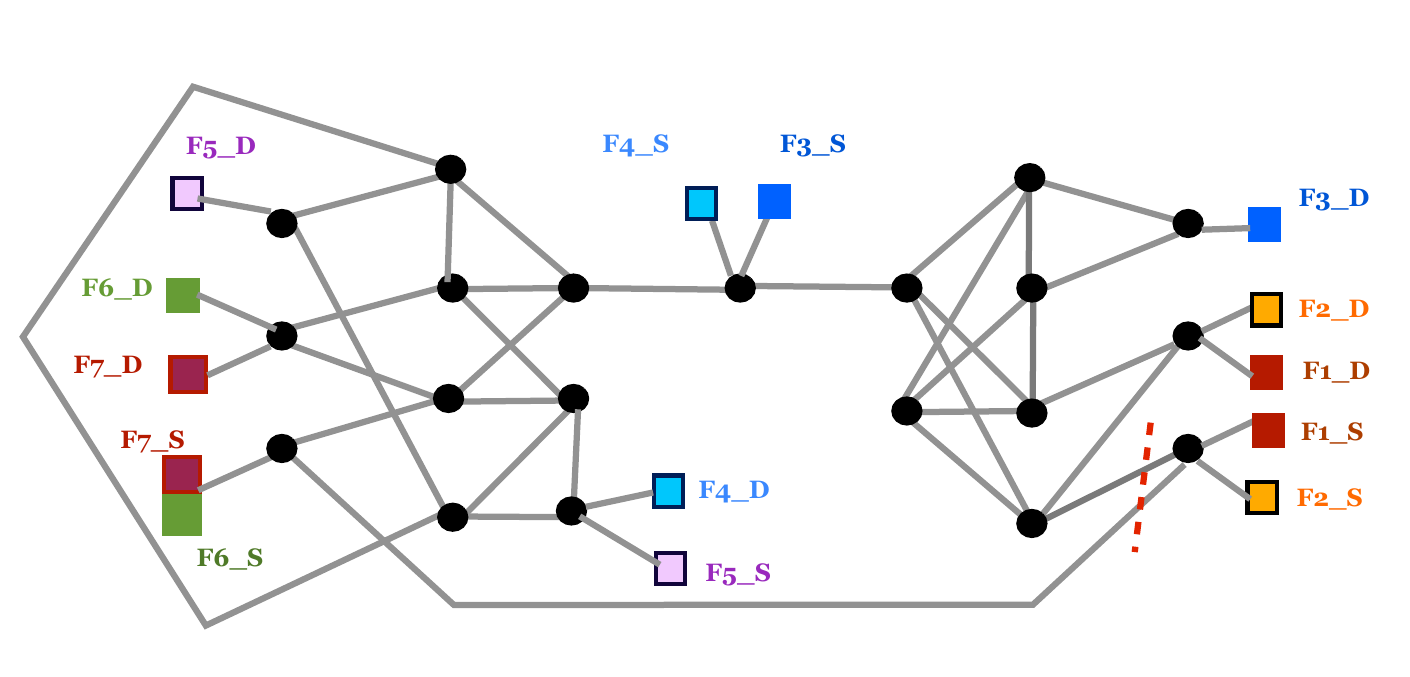}
	\centering
	\small Graph B - Random Graph
\end{minipage}
\caption{\small Fat tree and random graph with relevant source-destination pairs and sparsest-cut}
\label{fig:ftCompare}
\end{figure*}
}


\pbgnewer{Based on the above result and experimental confirmation in \S\ref{sec:results}, we conclude that cuts are not a good choice to evaluate throughput.  Cuts still capture important topological properties such as the difficulty of partitioning a network, and physical layout (packaging and cabling) complexity.  However, these properties are outside the scope of our study of throughput.}


\subsection{Towards a worst-case throughput metric}
\label{subsec:thputMetric}

\begin{figure*}
\centering
\begin{minipage}[b]{0.325\linewidth}
	\includegraphics[width=2.4in]{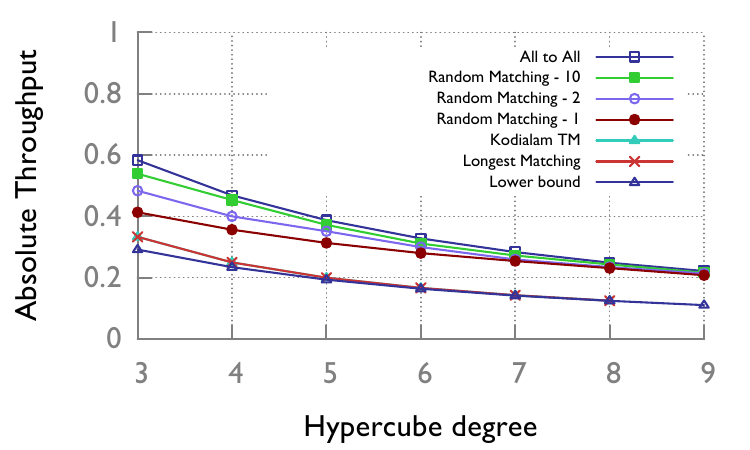}
	\centering
\end{minipage}
\begin{minipage}[b]{0.325\linewidth}
	\includegraphics[width=2.4in]{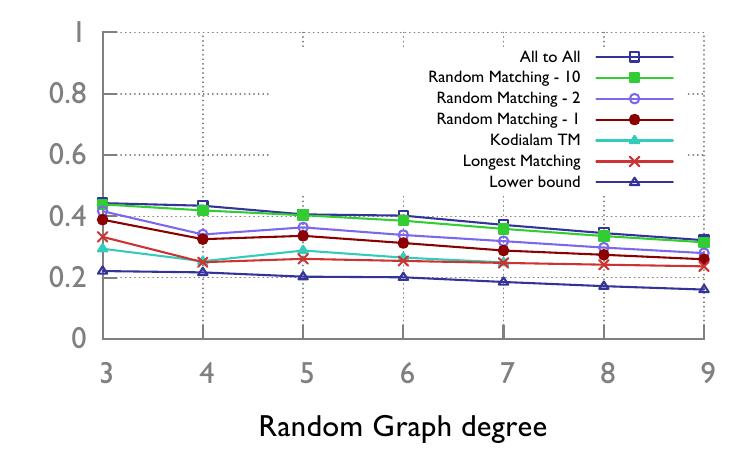}
	\centering
\end{minipage}
\begin{minipage}[b]{0.3\linewidth}
	\includegraphics[width=2.4in]{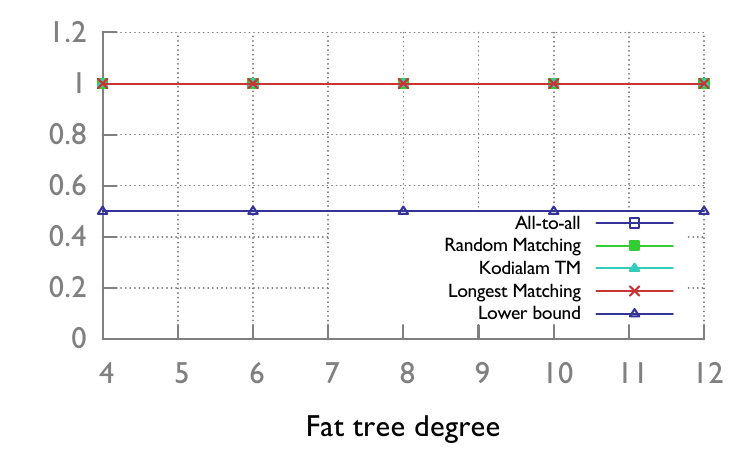}
	\centering
\end{minipage}
\caption{\small Throughput resulting from several different traffic matrices in three topologies}
\label{fig:TCompare}
\end{figure*}

\pbgnewer{Having exhausted cut-based metrics, we return to the original metric of throughput defined in \S\ref{sec:throughput-defined}.  We can evaluate network topologies directly in terms of throughput (via LP optimization software) for specific TMs.  The key, of course, is how to choose the TM.  Our evaluation can include a variety of synthetic and real-world TMs, but as discussed in the introduction, we also want to evaluate topologies' robustness to \emph{unexpected} TMs.}

\pbgnewer{If we can find a worst-case TM, this would fulfill our goal.} However, computing a worst-case TM is an unsolved, computationally non-trivial problem~\cite{hardness}.\footnote{Our problem corresponds to the separation problem of the minimum-cost robust network design in~\cite{hardness}. This problem is shown to be hard for the single-source hose model. However, the complexity is unknown for the hose model with multiple sources which is the scenario we consider.}
Here, we offer an efficient heuristic to find a \emph{near-worst-case TM} which can be used to benchmark topologies.

We begin with the complete or \textbf{all-to-all TM} $T_{A2A}$ which for all $v,w$ has $T_{A2A}(v,w)=\frac{1}{n}$. We observe that $T_{A2A}$ is within $2\times$ of the worst case TM. This fact is simple and known to some researchers, but at the same time, we have not seen it in the literature, so we give the statement here and proof in Appendix B.

\begin{theorem}\label{thm:a2a}
Let $G$ be any graph.  If $T_{A2A}$ is feasible in $G$ with throughput $t$, then any hose model traffic matrix is feasible in $G$ with throughput $\geq t/2$.
\end{theorem}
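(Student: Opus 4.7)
The plan is to prove the theorem via a Valiant-style two-hop routing argument, using the feasible all-to-all routing as a primitive. Since $T_{A2A}$ is feasible with throughput $t$, there exists a multicommodity flow $f$ in $G$ that routes exactly $t/n$ units between every ordered pair $(v,w)$ while respecting edge capacities. I will use $f$ as a ``black box'' for shipping small amounts of flow between arbitrary pairs, and then show that any hose-model TM $T$ can be realized at rate $t/2$ by combining two rounds of such shipments through uniformly random intermediaries.

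The routing construction is as follows. For each source-destination pair $(v,w)$ with demand $T(v,w)$, I will route $(t/2)\cdot T(v,w)$ units from $v$ to $w$ by splitting the flow equally across all $n$ possible intermediate nodes $u$, sending $(t/2)\cdot T(v,w)/n$ units along the two-leg path $v \to u \to w$. Each individual leg is realized by appropriately scaling the corresponding ``channel'' of the A2A flow $f$.

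The key step is to verify that the aggregate demand placed on each A2A channel does not exceed its available capacity of $t/n$. Fix an ordered pair $(a,b)$. Flow is placed on the $a \to b$ channel by (i) first-leg traffic where $a$ is the source and $b$ is the random intermediate, contributing $\sum_w (t/2)\cdot T(a,w)/n$, and (ii) second-leg traffic where $a$ is the random intermediate and $b$ is the destination, contributing $\sum_v (t/2) \cdot T(v,b)/n$. The hose constraints $\sum_w T(a,w) \le 1$ and $\sum_v T(v,b) \le 1$ bound each contribution by $t/(2n)$, for a total of at most $t/n$, exactly the capacity the A2A routing provides on that channel. Hence the composite flow is feasible, so $T$ is feasible at throughput $t/2$.

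I expect no serious obstacle: the argument is essentially a bookkeeping check that Valiant's two-hop scheme meshes with the linearity of multicommodity flow. The only subtle point is being careful about ordered versus unordered pairs and about the fact that $f$ gives us the $t/n$ guarantee between every ordered pair (including when $a=b$, which contributes no constraint), so that the two-round accounting above is tight and does not double-count capacity.
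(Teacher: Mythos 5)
Your proposal is correct and is essentially the same argument as the paper's own proof: both treat the feasible all-to-all routing as an overlay of capacity-$t/n$ channels between every ordered pair and apply Valiant-style two-hop load balancing through a uniformly split intermediate, with the identical $\frac{t}{2n}+\frac{t}{2n}=\frac{t}{n}$ capacity accounting on each channel. No gaps.
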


Can we get closer to the worst case TM?  In our experience, TMs with a smaller number of ``elephant'' flows are more difficult to route than TMs with a large number of small flows, like $T_{A2A}$.  This suggests a \textbf{random matching TM} in which we have only one outgoing flow and one incoming flow per server, chosen uniform-randomly among servers. 


Can we get \emph{even closer} to the worst-case TM?  Intuitively, the all-to-all and random matching TMs will tend to find \pbgnew{sparse} cuts, but only have average-length paths.  Drawing on the intuition that throughput decreases with average flow path length, we seek to produce traffic matrices that force the use of long paths.  To do this, given a network $G$, we compute all-pairs shortest paths and create a complete bipartite graph $H$, whose nodes represent all sources and destinations in $G$, and for which the weight of edge $v\to w$ is the length of the shortest $v\to w$ path in $G$.  We then find the maximum weight matching in $H$. The resulting matching corresponds to the pairing of servers which maximizes average flow path length, assuming flow is routed on shortest paths between each pair.  We call this a \textbf{longest matching TM}, \pbgnewer{and it will serve as our heuristic for a near-worst-case traffic.}

\SC{Kodialam et al.~\cite{kodialam} proposed another heuristic to find a near-worst-case TM:  maximizing the average path length of a flow.  This \textbf{Kodialam TM} is similar to the longest matching but may have many flows attached to each source and destination.  This TM was used in~\cite{kodialam} to evaluate oblivious routing algorithms, but there was no evaluation of how close it is to the worst case, so our evaluation here is new.}

Figure~\ref{fig:TCompare} shows the resulting throughput of these TMs in three topologies: hypercubes, random regular graphs, and fat trees~\cite{fattree-new}. In all cases, A2A traffic has the highest throughput; throughput decreases or does not change as we move to a random matching TM with $10$ servers per switch, and progressively decreases as the number of servers per switch is decreased to $1$ under random matching, and finally to the Kodialam TM and the longest matching TM.  We also plot \pbgnew{the lower bound given by Theorem~\ref{thm:a2a}}: $T_{A2A}/2$. Comparison across topologies is not relevant here since the topologies are not built with the same ``equipment'' (node degree, number of servers, etc.)

We chose these three topologies to illustrate cases when our approach is most helpful, somewhat helpful, and least helpful at finding near-worst-case TMs.  In the \textbf{hypercube}, the longest matching TM is extremely close to the worst-case performance. To see why, note that the longest paths have length $d$ in a $d$-dimensional hypercube, twice as long as the mean path length; and the hypercube has $n\cdot d$ uni-directional links. The total flow in the network will thus be $(\#\: flows \cdot average\: flow\: path\: length) = n\cdot d$. \pbgnew{Thus,} all links will be perfectly utilized if the flow can be routed, which empirically it is.  In the \textbf{random graph}, there is less variation in end-to-end path length, but across our experiments the longest matching is always within $1.5\times$ of the provable lower bound (\pbgnew{and may be even closer to the true lower bound, since Theorem~\ref{thm:a2a}} may not be tight).  In the \textbf{fat tree}, which is here built as a three-level nonblocking topology, there is essentially no variation in path length since asymptotically nearly all paths go three hops up to the core switches and three hops down to the destination.  Here, our TMs are no worse than all-to-all, and the simple $T_{A2A}/2$ lower bound is off by a factor of $2$ from the true worst case (which is throughput of $1$ as this is a nonblocking topology).

The longest matching and Kodialam TMs are identical in hypercubes and fat trees. On random graphs, they yield slightly different TMs, with longest matching yielding marginally better results at larger sizes.  In addition, longest matching has a significant practical advantage:  it produces far fewer end-to-end flows than the Kodialam TM.  Since the memory requirements of computing throughput of a given TM (via the multicommodity flow LP) depends on the number of flows, longest matching requires less memory and compute time. For example, in random graphs on a 32 GB machine using the Gurobi optimization package, the Kodialam TM can be computed up to 128 nodes while the longest matching scales to 1,024, \pbgnewer{while being computed roughly $6\times$ faster}. Hence, we choose longest matching as our representative near-worst-case traffic matrix.



\subsection{Summary and implications}




Directly evaluating throughput with particular TMs using LP optimization is both more accurate and more tractable than cut-based metrics.  In choosing a TM to evaluate, both ``average case'' and near-worst-case TMs are reasonable choices. \pbgnewer{Our evaluation will employ multiple synthetic TMs and measurements from real-world applications. For near-worst-case traffic, we developed a practical heuristic, the longest matching TM, that often succeeds in substantially worsening the TM compared with A2A.}

\pbgnewer{Note that measuring throughput directly (rather than via cuts) is not in itself a novel idea: numerous papers have evaluated particular topologies on particular TMs.  Our contribution is to provide a rigorous analysis of why cuts do not always predict throughput; a way to generate a near-worst-case TM for any given topology; and an experimental evaluation benchmarking a large number of proposed topologies on a variety of TMs.}

\section{Experimental Methods}
\label{sec:results}

In this section, we present our experimental methodology and detailed analysis of our framework. We attempt to answer the following questions: Are cut-metrics indeed worse predictors of performance?  When measuring throughput directly, how close do we come to worst-case traffic?

\subsection{Methodology}
Before delving into the experiments, we explain the methods used for computing throughput. We also elaborate on the traffic workloads and topologies used in the experiments.
%
%
%


\subsubsection{\pbgnew{Computing throughput}}

Throughput is computed as a solution to a linear program whose objective is to maximize the minimum flow across all flow demands, as explained in \S\ref{sec:Metrics}. We use the Gurobi~\cite{gurobi} linear program solver. \pbgnew{Throughput} depends on the traffic workload provided to the network.

\subsubsection{Traffic workload} 
We evaluate two main categories of workloads: (a) real-world measured TMs from Facebook clusters and (b) synthetic TMs, which can be uniform weight or non-uniform weight. Synthetic workloads belong to three main families: all-to-all, random matching and longest matching (near-worst-case). In addition, we need to specify where the traffic endpoints (sources and destinations, i.e., servers) are attached. In structured networks with restrictions on server-locations (fat-tree, BCube, DCell), servers are added at the locations prescribed by the models. For example, in fat-trees, servers are attached only to the lowest layer. For all other networks, we add servers to each switch. Note that our traffic matrices effectively encode switch-to-switch traffic, so the particular \emph{number} of servers doesn't matter.

\subsubsection{Topologies}

Our evaluation uses $10$ families of computer networks. Topology families evaluated are: BCube~\cite{BCube}, DCell~\cite{dcell}, Dragonfly~\cite{dragonfly}, Fat Tree~\cite{fatTree}, Flattened butterfly~\cite{flatBF}, Hypercubes~\cite{hypercubenetwork}, HyperX~\cite{hyperx}, Jellyfish~\cite{jellyfish}, Long Hop~\cite{ANCS} and Slim Fly~\cite{slim-fly}. For evaluating the cut-based metrics in a wider variety of environments, we consider $66$ non-computer networks -- food webs, social networks, and more~\cite{tool}.

\subsection{\pbgnewer{Do cuts predict worst-case throughput?}}

\pbgnewer{In this section, we experimentally evaluate whether cut-based metrics predict throughput.} We generate multiple networks from each of our topology families (with varying parameters such as size and node degree), compute throughput with the longest matching TM, and find sparse cuts using heuristics with the same longest matching TM. We show that:

\begin{itemize}[leftmargin=*]


	\item In several networks, bisection bandwidth cannot predict throughput accurately. For a majority of large networks, our best estimate of sparsest-cut differs from the computed worst-case throughput by a large factor.

	\item Even in a well-structured network of small size (where brute force is feasible), sparsest-cut can be higher than worst-case throughput.

	
	
\end{itemize}


\begin{figure*}[ht!]
\begin{minipage}[b]{0.55\linewidth}
	\includegraphics[width=3.6in]{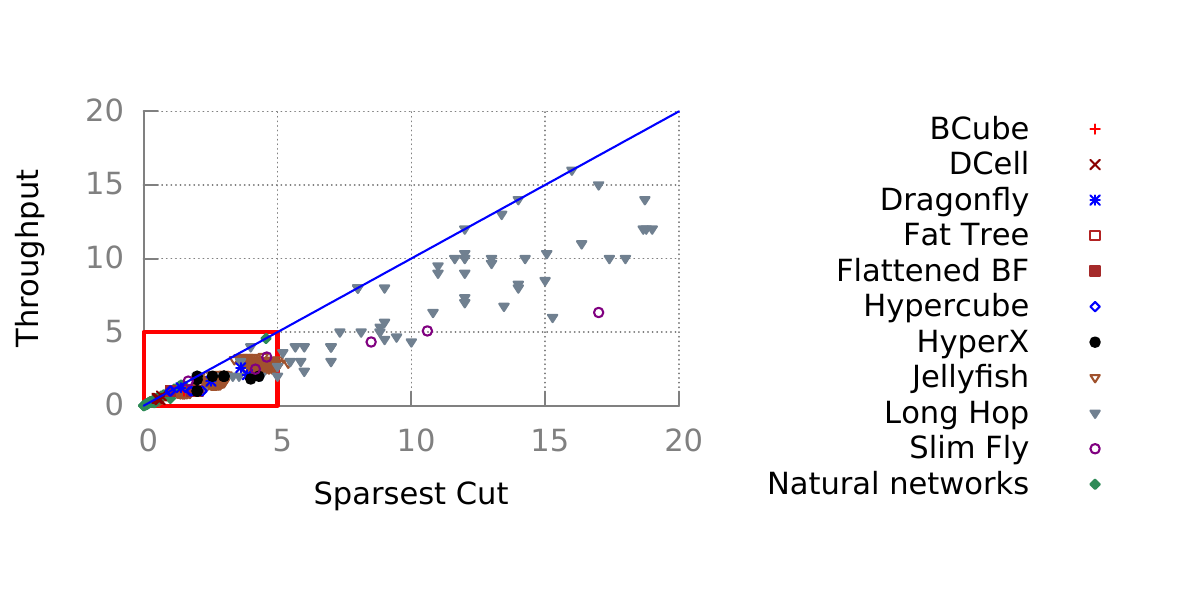}
	\newline (a) Throughput vs. cut for all graphs
	
\end{minipage}
\begin{minipage}[b]{0.3\linewidth}
	\includegraphics[width=2.5in]{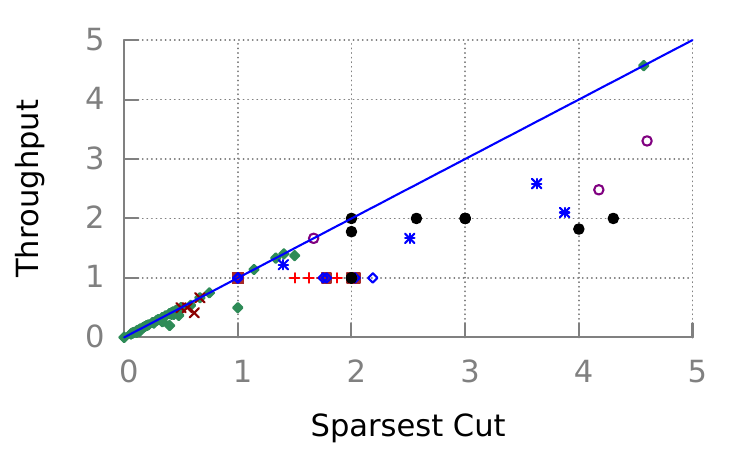}
	\newline (b) Throughput vs. cut for selected graphs \SC{(zoomed version of (a))}
	\centering
\end{minipage}
\caption{\small Throughput vs. cut.  (Comparison is valid for individual networks, not \emph{across} networks, since they have different numbers of nodes and degree distributions.)}
\label{fig:cutvFlow}
\end{figure*}

Since brute-force computation of cuts is possible only on small networks (up to 20 nodes), we evaluate bisection bandwidth and sparsest cut on networks of feasible size ($115$ networks total -- $100$ Jellyfish networks and $15$ networks from $7$ families of topologies). Of the $8$ topology families tested, we found that bisection bandwidth accurately predicted throughput in only $5$ of the families while sparsest cut gives the correct value in $7$. The average error (difference between cut and throughput) is $7.6\%$ for bisection bandwidth and $6.2\%$ for sparsest cut (in networks where they differ from throughput). Maximum error observed was $56.3\%$ for bisection bandwidth and $6.2\%$ for sparsest cut.

Although sparsest cut does a better job at estimating throughput at smaller sizes, we have found that in a $5$-ary $3$-stage flattened butterfly with only $25$ switches and $125$ servers, the throughput is less than the sparsest cut (and the bisection bandwidth). Specifically, the absolute throughput in the network is $0.565$ whereas the sparsest-cut is $0.6$. This shows that even in small networks, throughput can be different from the worst-case cut.  While the differences are not large in the small networks where brute force computation is feasible, note that since cuts and flows are separated by an $\Theta(\log n)$ factor, we should expect them to grow.

Sparsest cut being the more accurate cut metric, we extend the sparsest cut estimation to larger networks.  Here we have to use heuristics to compute sparsest cut, \pbgnewer{but we compute all of an extensive set of heuristics} (limited brute-force, eigenvector-based optimization, etc. detailed in Appendix C) and use the term \textbf{sparse cut} to refer to the sparsest cut that was found by any of the heuristics. Sparse cuts differ from throughput substantially, with up to a $3\times$ discrepancy as shown in Figure~\ref{fig:cutvFlow}.  In only a small number of cases, the cut equals throughput.  The difference between cut and throughput is pronounced. For example, Figure~\ref{fig:cutvFlow}(b) shows that although HyperX networks of different sizes have approximately same flow value (y axis), they differ widely in sparsest cut (x axis). This shows that estimation of worst-case throughput performance of networks using cut-based metrics can lead to erroneous results.

\begin{figure}[t]
\centering
\includegraphics[width=3in]{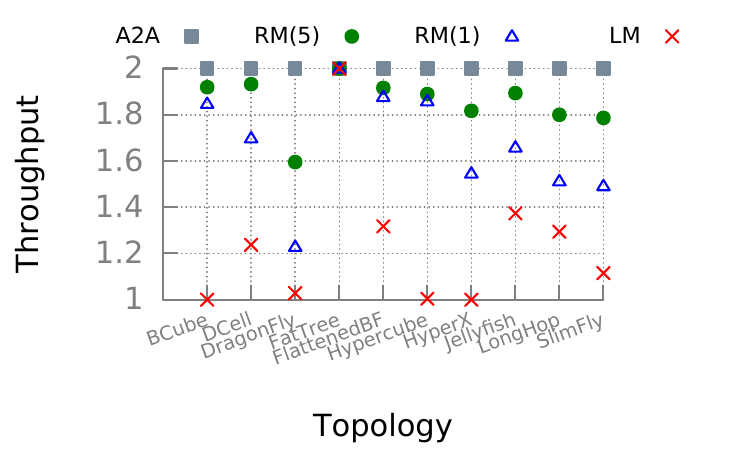}
\caption{\small Comparison of throughput under different traffic matrices normalized with respect to theoretical lower bound}
\label{fig:compareTMsAll}
\end{figure}

\subsection{\pbgnewer{Does longest matching approach the true worst case?}} 


We compare representative samples from each family of topology under \pbgnew{four types of} TM: all to all (A2A), random matching with $5$ servers per switch, random matching with $1$ server per switch, and longest matching. Figure~\ref{fig:compareTMsAll} shows the throughput values normalized so that the theoretical lower bound on throughput is $1$, and therefore A2A's throughput is $2$. For all networks, $ T_{A2A} \geq T_{RM(5)} \geq T_{RM(1)} \geq T_{LM} \geq 1$, \pbgnewer{i.e., all-to-all is consistently the easiest TM in this set, followed by random matchings, longest matching, and the theoretical lower bound.  This confirms our intuition discussed in \S\ref{subsec:thputMetric}.}  (As in Figure \ref{fig:cutvFlow}, throughput comparisons are valid across TMs for a particular network, not across networks since the number of links and servers varies across networks.)


Our longest matching TM is successful in matching the lower bound for BCube, Hypercube, HyperX, and (nearly) Dragonfly.  In all other families except fat trees, the traffic under longest matching is significantly closer to the lower bound than with the other TMs.  In fat trees, throughput under A2A and longest matching are equal. This is not a shortcoming of the metric, rather it’s the lower bound which is loose here: in fat trees, it can be easily verified that the flow out of each top-of-rack switch is the same under all symmetric TMs (i.e., with equal-weight flows and the same number of flows into and out of each top-of-rack switch).


\pbgnewer{In short, these results show that (1) the heuristic for near-worst-case traffic, the longest matching TM, is a significantly more difficult TM than A2A and RM and often approaches the lower bound; and (2) throughput measurement using longest matching is a more accurate estimate of worst-case throughput performance than cut-based approximations, in addition to being substantially easier to compute.}




\section{Topology evaluation}
\label{sec:expResults}

In this section, we present the results of our topology evaluation with synthetic and real-world workloads\pbgnewer{, and our near-worst-case TM.}

\pbgnewer{But first,} there is one more piece of the puzzle to allow comparison of networks. Networks may be built with different equipment -- with a wide variation in number of switches and links. The raw throughput value does not account for this difference in hardware resources, \pbgnew{and most proposed topologies can only be built with particular discrete numbers of servers, switches, and links, which inconveniently do not match.}

\begin{figure}[ht]
\centering
\begin{minipage}[b]{\linewidth}
	\includegraphics[width=2.2in]{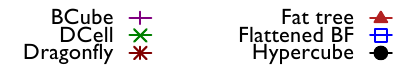}
	\centering
\end{minipage}
\begin{minipage}[b]{\linewidth}
	\includegraphics[width=2.8in]{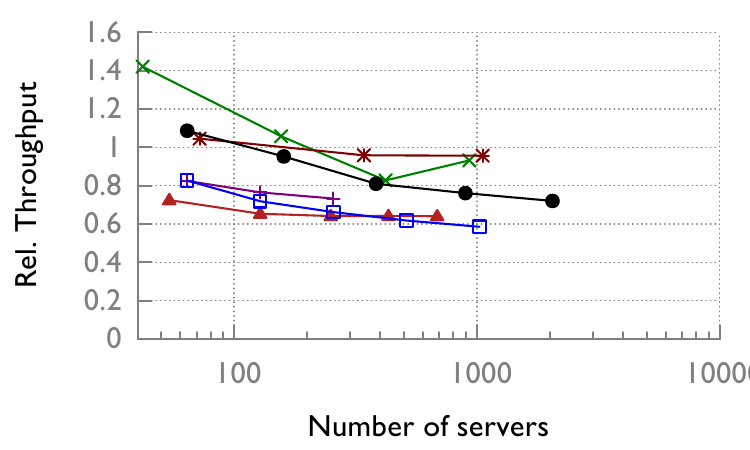}
	\newline (a) All to All TM
	\centering
\end{minipage}
\begin{minipage}[b]{\linewidth}
	\includegraphics[width=2.8in]{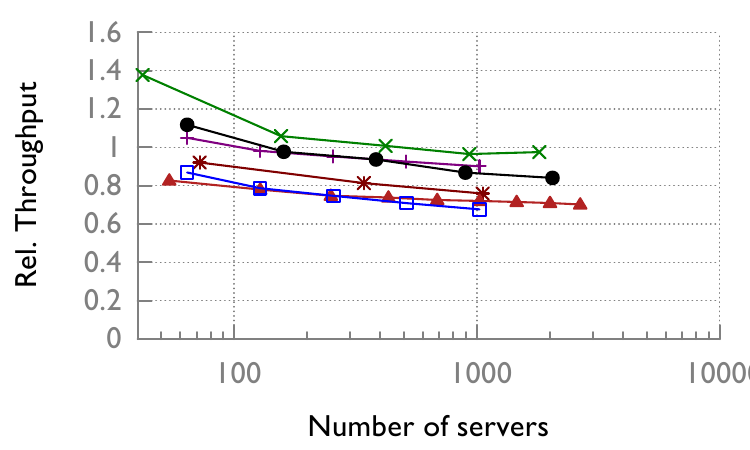}\\
	\newline (b) Random Matching TM
	\centering
	\label{fig:RandTM}
\end{minipage}
\begin{minipage}[b]{\linewidth}
	\includegraphics[width=2.8in]{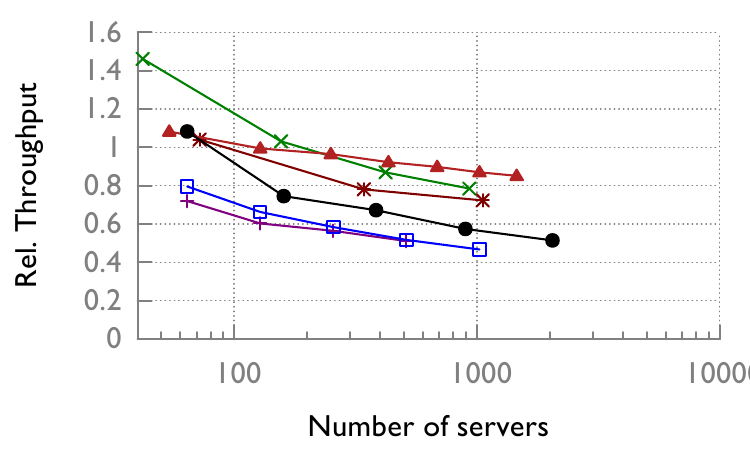}
	\newline (c) Longest Matching TM
	\centering
\end{minipage}
\caption{\small Comparison of TMs on topologies}
\label{fig:AllCompare1}
\end{figure}

\begin{figure}[ht]
\centering
\begin{minipage}[b]{\linewidth}
	\includegraphics[width=2in]{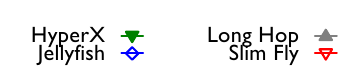}
	\centering
\end{minipage}
\begin{minipage}[b]{\linewidth}
	\includegraphics[width=2.8in]{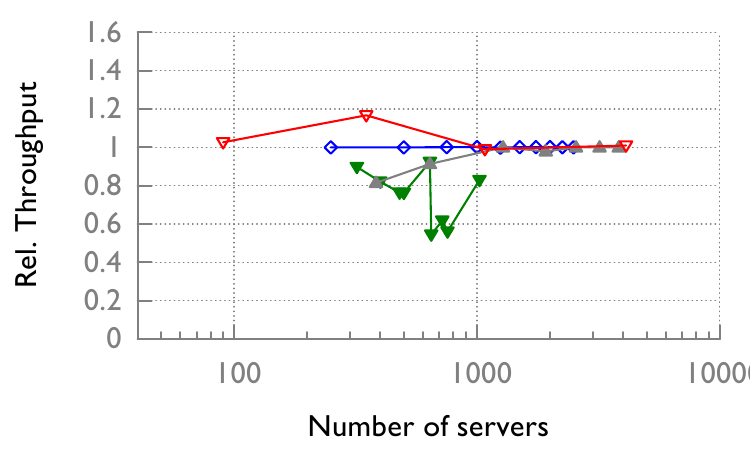}
	\newline (a) All to All TM
	\centering
\end{minipage}
\begin{minipage}[b]{\linewidth}
	\includegraphics[width=2.8in]{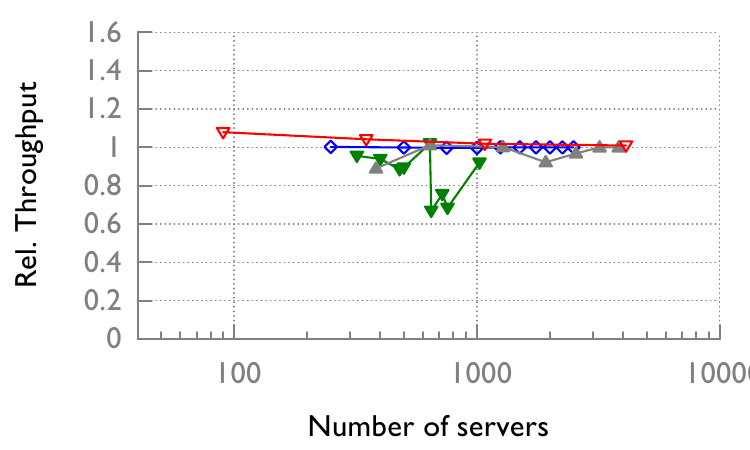}
	\newline (b) Random Matching TM
	\centering
\end{minipage}
\begin{minipage}[b]{\linewidth}
	\includegraphics[width=2.8in]{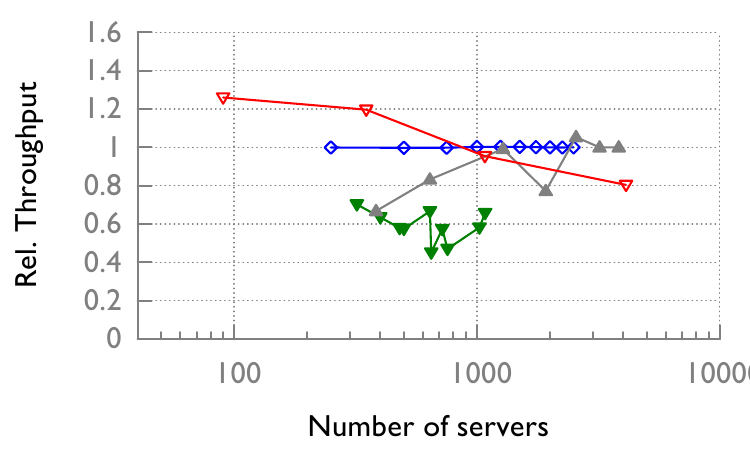}
	\newline (c) Longest Matching TM
	\centering
\end{minipage}
\caption{\small Comparison of TMs on \pbgnew{more} topologies}
\label{fig:AllCompare2}
\end{figure}

\pbgnew{Fortunately, uniform-random graphs as in~\cite{jellyfish} can be constructed for any size and degree distribution.  Hence, random graphs serve as a convenient benchmark for easy comparison of network topologies.}  Our high-level approach to evaluating a network is to: (i) compute the network's throughput; (ii) build a random graph with precisely the \pbgnew{same equipment, i.e., the same number of nodes each with the same number of links as the corresponding node in the original graph, (iii)} compute the throughput of \pbgnew{this same-equipment random graph} under the same TM; (iv) normalize the network's throughput with the random graph's throughput for comparison against other \pbgnew{networks. This} normalized value is referred to as {\bf relative throughput}.  Unless otherwise stated, each data-point is an average across $10$ iterations, and all error bars are $95\%$ two-sided confidence intervals. \ajsnew{Minimal variations lead to narrow error bars in networks of size greater than $100$.}




\vspace{-2mm}


\subsection{Synthetic Workloads}

We evaluate the performance of $10$ types of computer networks under two categories of synthetic workloads: (a) uniform weight and (b) non-uniform weight.


\paragraphbi{1) Uniform-Weight Synthetic Workloads}

We evaluate three traffic matrices with equal weight across flows: all to all, random matching with one server, and longest matching.
Figures~\ref{fig:AllCompare1} and \ref{fig:AllCompare2} show the results, divided among two figures for visual clarity.  Most topologies are in Figure~\ref{fig:AllCompare1}, while Figure~\ref{fig:AllCompare2} shows Jellyfish, Slim Fly, Long Hop, and HyperX.

\paragraphb{Overall,} \pbgnew{performance varies substantially, by around $1.6\times$ with A2A traffic and more than $2\times$ with longest matching.  Furthermore, for the networks of Figure~\ref{fig:AllCompare1}, which topology ``wins'' depends on the TM.
For example, Dragonfly has high relative throughput under A2A but under the longest matching TM, fat trees achieve the highest throughput at the largest scale.  However, in all TMs, Jellyfish achieves highest performance (relative throughput $=1$ by definition), with longest matching  at the largest scale (1000+ servers).} In comparison with random graphs, performance degrades for most networks with increasing size.  The relative performance of the largest network tested in each family in Figure~\ref{fig:AllCompare1} is given in Table~\ref{table:relative}. \pbgnewer{The pattern of performance of the networks shown in Figure~\ref{fig:AllCompare2} is different, so we discuss these in detail next.}



\begin{table}[ht]
\small
\centering
  \begin{tabular}{ | L{2cm} | C{1cm} | C{1.3cm} | C{1.3cm} |}
    \hline 
    Topology family & All-To-All & Random Matching &  Longest matching \\ \hline
    BCube ($2$-ary)  & 73\% & 90\% & 51\%   \\ \hline
    DCell ($5$-ary) & 93\% & 97\% & 79\%   \\ \hline
	Dragonfly & 95\% & 76\% & 72\%    \\ \hline
	Fat tree & 65\% & 73\% & 89\%    \\ \hline 
	Flattened BF ($2$-ary) & 59\% & 71\% & 47\%  \\ \hline   	
    	Hypercube & 72\% & 84\% & 51\%  \\ \hline
    \end{tabular}
    \label{table:relative}
    \caption{Relative throughput at the largest size tested under different TMs}
\end{table} 

\begin{figure*}[t]
\hspace{-0.5cm}
  \begin{minipage}[t]{0.02\linewidth}~
  \end{minipage}
      \begin{minipage}[t]{0.3\linewidth}
    \centering
    \includegraphics[scale=.45]{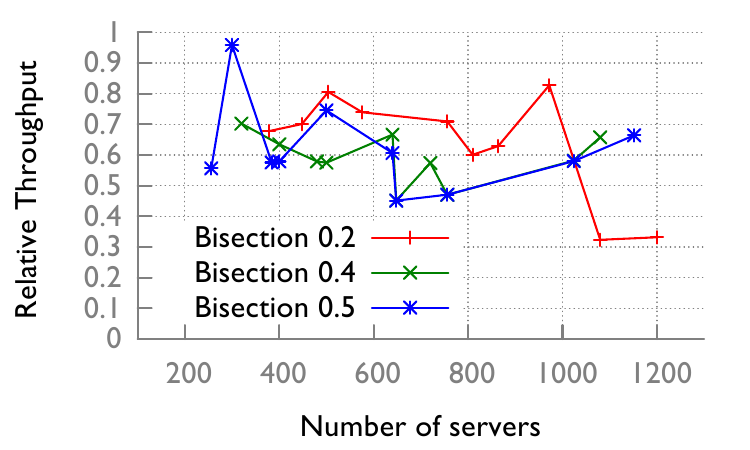}\\
    \caption{\small HyperX Relative Throughput under Longest Matching with different Bisection}
   \label{fig:hyperx}
  \end{minipage}
    \begin{minipage}[t]{0.02\linewidth}~
  \end{minipage}
\begin{minipage}[t]{0.3\linewidth}
    \centering
    \includegraphics[scale=.45]{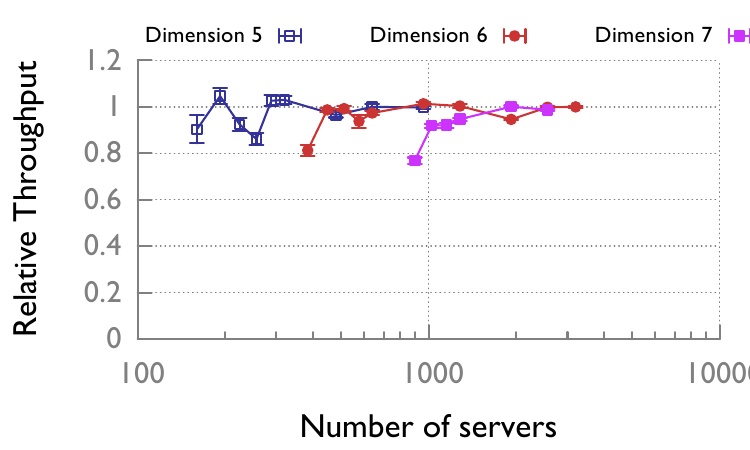}\\
    \caption{\small Long Hop network relative throughput under longest matching TM }
     \label{fig:ANCS}
  \end{minipage}
    \begin{minipage}[t]{0.02\linewidth}~
  \end{minipage}
    \begin{minipage}[t]{0.3\linewidth}
    \centering
    \includegraphics[scale=.45]{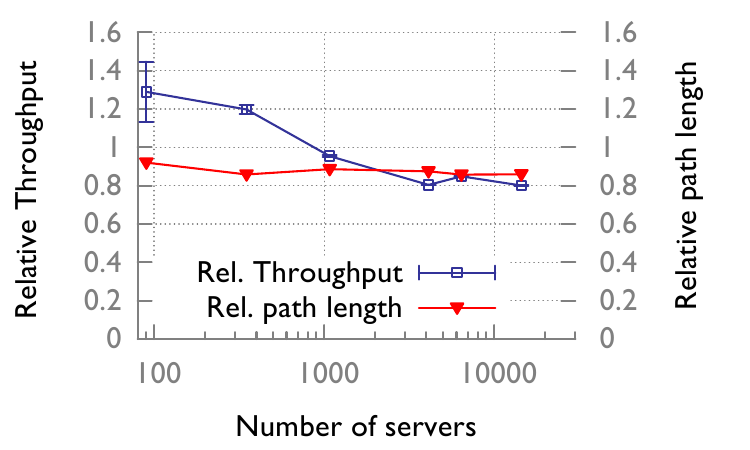}\\
    \caption{ \small Slim Fly Relative Throughput under Longest Matching TM}
\label{fig:slimFly}
  \end{minipage}
\end{figure*}


%

\paragraphb{HyperX}~\cite{hyperx} \pbgnewer{has irregular performance across scale.} The variations arise from the choice of underlying topology: Given a switch radix, number of servers and desired bisection bandwidth, HyperX attempts to find the least cost topology constructed from the building blocks -- hypercube and flattened butterfly. Even a slight variation in one of the parameters can lead to a significant difference in HyperX construction and hence throughput.

We plot HyperX results for networks with bisection bandwidth of $0.4$, i.e., in the worst case the network can support flow from $40\%$ of the hosts without congestion. In the worst case, the HyperX network that can support $648$ hosts achieves a relative throughput of $55\%$ under all-to-all traffic. For the same network, relative performance improves to $67\%$ under random matching but further degrades to $45\%$ under longest matching. We investigate the performance of HyperX networks with different bisection bandwidths under the longest matching TM in Figure~\ref{fig:hyperx} and observe that the performance varies widely with network size under all values of bisection. More importantly, this further illustrates that high bisection does not guarantee high performance.


\paragraphb{Jellyfish}~\cite{jellyfish} \pbgnew{is our benchmark, and its (average) relative performance is $100\%$ under all workloads by definition.  Performance depends on randomness in the construction, but at larger sizes, performance differences between Jellyfish instances are minimal ($\approx 1\%$).}


\paragraphb{Long Hop networks}~\cite{ANCS} In Figure~\ref{fig:ANCS}, we show that the relative throughput of Long Hop networks \pbgnew{approaches} $1$ at large sizes under the longest matching TM. Similar trends are observed under all-to-all and random matching workloads. \pbgnew{The paper~\cite{ANCS} claimed high performance (in terms of bisection bandwidth) with substantially less equipment than past designs, but we find that while Long Hop networks do have high performance, they are no better than random graphs, and sometimes worse.}


\paragraphb{Slim Fly}~\cite{slim-fly}  \pbgnew{It was noted in~\cite{slim-fly} that Slim Fly's key advantage is path length. We observe in Figure~\ref{fig:slimFly} that the topology indeed has extremely short paths -- about $85$-$90\%$ relative to the random graph -- but this does not translate to higher throughput.  Slim Fly's performance is essentially identical to random graphs under all-to-all and random matching TMs with a relative performance of $101\%$ using both.  Relative throughput under the longest matching TM decreases with \pbgnew{scale, dropping} to $80\%$ at the largest size tested. Hence, throughput performance cannot be predicted solely based on the average path length.}

\begin{figure*}[ht!]
\hspace{-0.5cm}
	\begin{minipage}[b]{0.35\linewidth}
		\includegraphics[scale=0.7]{heading1.pdf}
		\centering
	\end{minipage}
		\begin{minipage}[b]{0.35\linewidth}
		\includegraphics[scale=0.7]{heading2.pdf}
		\centering
	\end{minipage}
		\begin{minipage}[b]{0.2\linewidth}
		\includegraphics[scale=0.7]{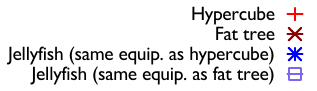}
		\centering
	\end{minipage}
  \newline
  
  \begin{minipage}[b]{0.31\linewidth}
	\includegraphics[scale=0.45]{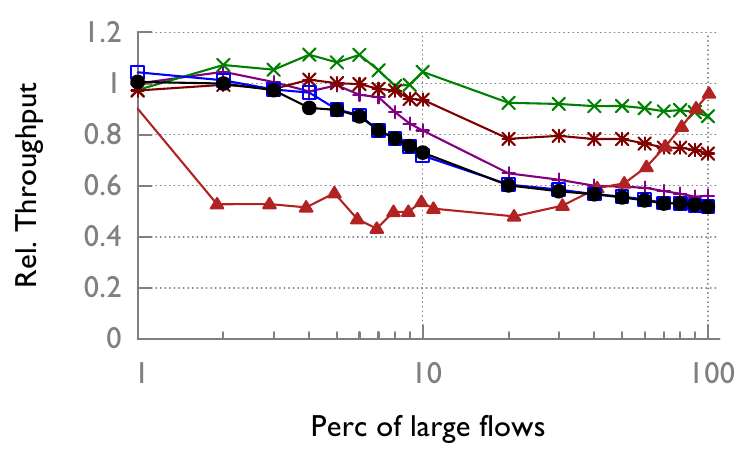}
			\centering
			\caption{\small Comparison of tunable TMs on topologies}
			\label{fig:AllCompareNewTM1}
	\end{minipage}
		    \begin{minipage}[t]{0.02\linewidth}~
  \end{minipage}
  \begin{minipage}[b]{0.31\linewidth}
	\includegraphics[scale=0.45]{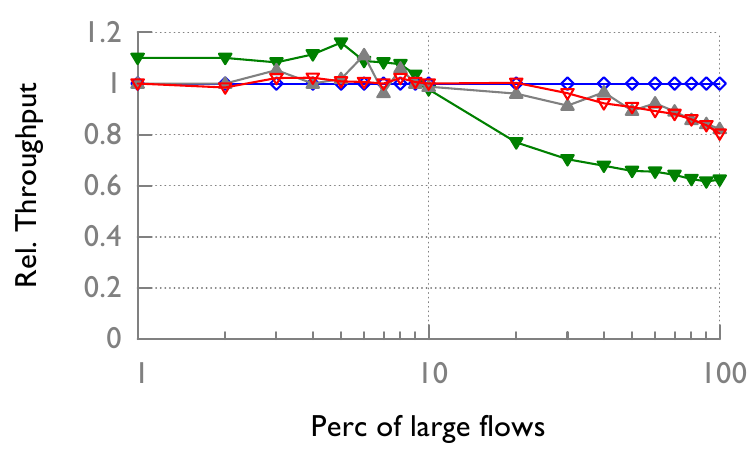}
			\centering
			\caption{\small Comparison of tunable TMs on topologies}
			\label{fig:AllCompareNewTM2}
	\end{minipage}
	    \begin{minipage}[t]{0.02\linewidth}~
  \end{minipage}
	  \begin{minipage}[b]{0.31\linewidth}
	\includegraphics[scale=0.45]{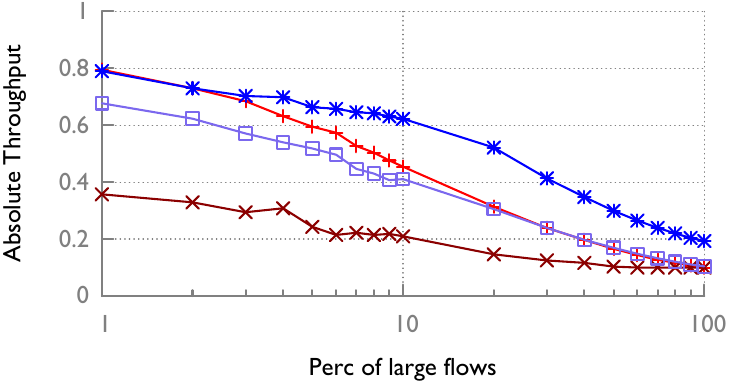}
			\centering
			\caption{\small Comparison of absolute throughput under tunable TMs (Jellyfish network corresponding to same equipment as hypercube and fat tree used for comparison)}
			\label{fig:CompareAbsolutre}
	\end{minipage}
\end{figure*}

\paragraphbi{2) Non-Uniform Weight Synthetic Workloads}

In order to understand the robustness of throughput performance against variations in TM, we consider the response of topologies to non-uniform traffic matrices. Using the longest matching TM, we assign a weight of $10$ to $x\%$ of flows in the network which are chosen randomly while all other flows have a weight of $1$. \pbgnewer{Thus, we have a longest matching TM modified so that a fraction of flows have higher demand than the rest.} We vary $x$ to understand the response of topologies to skewed traffic matrices. The experiment is conducted on a representative sample from each topology class. Note that the relative throughput at $0\%$ (which is not plotted) will be equal to that at $100\%$ since all flows are scaled by the same factor. 

We observe in Figures~\ref{fig:AllCompareNewTM1} and~\ref{fig:AllCompareNewTM2} that all topologies except fat trees perform well when there are a few large flows in the network. To further understand the anomaly with fat trees, we look at the absolute throughput in fat tree, hypercube and Jellyfish network built with matching gear in Figure~\ref{fig:CompareAbsolutre}. Trends of other networks are similar to hypercube. The anomalous behavior of fat trees arises due to the fact that load on links connected to the top-of-rack switch (ToR) in fat trees is from the directly connected servers only. In all other networks, under longest matching, links connected to ToRs carry traffic belonging to flows that do not originate or terminate at the ToR. Since the load is spread more evenly, a few large flows do not affect the overall throughput. Note that fat trees can be built to be non-blocking, with high performance for any traffic; what these results show is that for these TMs, fat trees would pay a high price in equipment cost to achieve full non-blocking throughput, relative to other topology options.

\textbf{Overall, } we find that \pbgnewer{under traffic matrices with non-uniform demands, the Jellyfish, DCell, Long hop and Slim Fly topologies perform better than other topologies}. While all other topologies exhibit graceful degradation of throughput with increase in the percentage of large flows, fat trees give anomalous behavior due to the load on the ToRs.

\subsection{Real-world workloads}


\pbgnewer{Roy et al.~\cite{FBTM} presents relative traffic demands measured during a period of $24$ hours in two $64$-rack clusters operated by Facebook.} The first TM corresponds to a Hadoop cluster and has nearly equal weights. The second TM from a frontend cluster is more non-uniform, with relatively heavy traffic at the cache servers and lighter traffic at the web servers. Since the raw data is not publicly available, we processed the paper's plot images to retrieve the approximate weights for inter-rack traffic demand with an accuracy of $10^i$ in the interval $[10^i, 10^{i+1})$ (from data presented in color-coded log scale). Since our throughput computation \pbgnewer{rescales the TM anyway, relative weights are sufficient}. Given the scale factor, the real throughput can be obtained by multiplying the computed throughput with the scale factor.

We test our slate of topologies with the Hadoop cluster TM (TM-H) and the frontend cluster TM (TM-F) and the results are presented in Figures~\ref{fig:AllCompareFB1} and~\ref{fig:AllCompareFB2} respectively. When a topology family does not have a candidate with $64$ ToRs, the TM is downsampled to the nearest valid size (denoted as Sampled points in the plot).

\pbgnewer{We also consider modifying these TMs by} shuffling the order of ToRs in the TM (denoted as Shuffled points in the plot). Under the nearly uniform TM-H, shuffling does not affect throughput performance significantly (Figure~\ref{fig:AllCompareFB1}). However, under TM-F with skewed weights, shuffling can give significant improvement in performance. In all networks except Jellyfish, Long Hop, Slim Fly and fat trees, randomizing the flows, which in turn spreads the large flows across the network, can improve the performance significantly when the traffic matrix is non-uniform. Note that relative performance here may not match that in Figures~\ref{fig:AllCompare1} and~\ref{fig:AllCompare2} since the networks have different numbers of servers and even ToRs (due to downsampling required for several topologies). However, we observe that the relative performance between topologies remains consistent across both the traffic matrices.

\begin{figure*}[t]
\begin{minipage}[b]{0.31\linewidth}
	\includegraphics[scale=0.45]{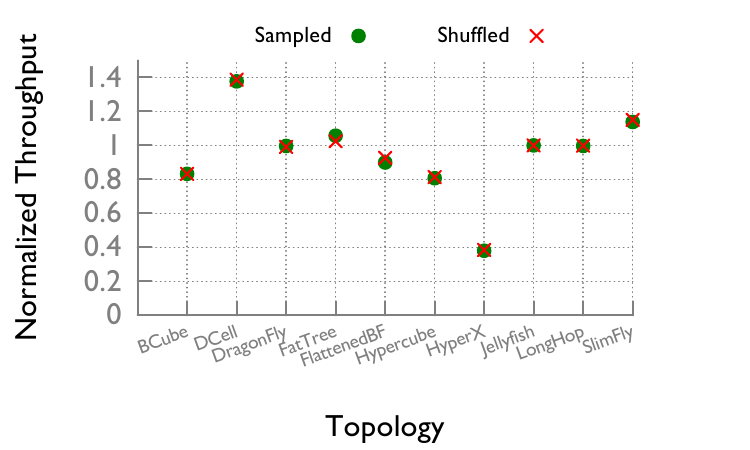}
			\vspace{-2mm}
			\centering
			\caption{\small Comparison of topologies with TM-H(~\cite{FBTM})}
			\label{fig:AllCompareFB1}
	\end{minipage}
		    \begin{minipage}[t]{0.02\linewidth}~
  \end{minipage}
  \begin{minipage}[b]{0.31\linewidth}
	\includegraphics[scale=0.45]{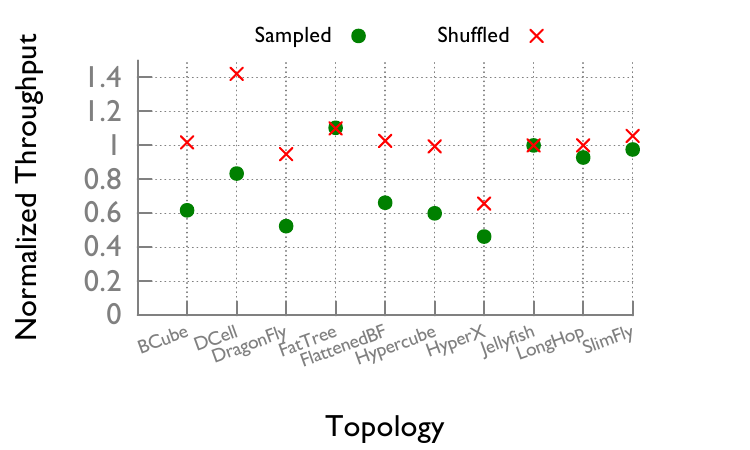}
			\vspace{-2mm}		
			\centering
			\caption{\small Comparison of topologies with TM-F(~\cite{FBTM})}
			\label{fig:AllCompareFB2}
	\end{minipage}
	    \begin{minipage}[t]{0.02\linewidth}~
  \end{minipage}
	  \begin{minipage}[b]{0.31\linewidth}
	\includegraphics[width=2in, height=1.3in]{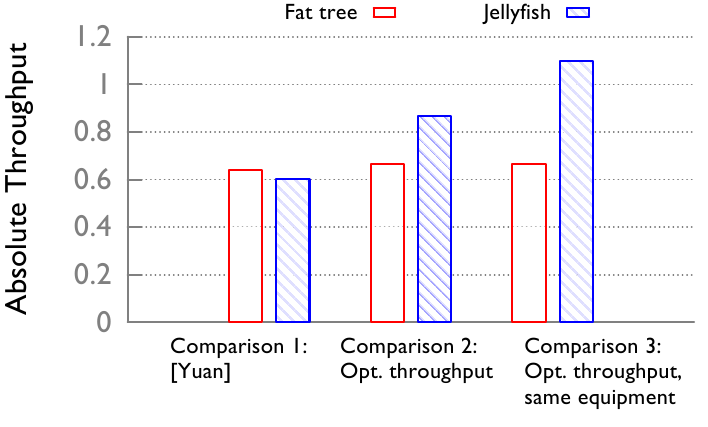}
			\vspace{-2mm}
			\centering
			\caption{\small \SC{Throughput comparison of Fat Tree and Jellyfish based on~\cite{HPC13}, showing the effect of two methodological changes}}
			\label{fig:hpc}
	\end{minipage}
\end{figure*}

\subsection{Summary} 
We have experimentally shown that:
\begin{itemize}[leftmargin=*]
	\item \pbgnewer{Cuts (as estimated using the best of a variety of heuristics) differ significantly from worst-case throughput in the majority of networks.}  
	\item Our longest matching TM approaches the theoretical lower bound in all topology families. They are significantly more difficult to route than all-to-all and random matching TMs.
	\item At large sizes, Jellyfish, Long Hop and Slim Fly networks provide the best throughput performance. 
	\item Designing high throughput networks based on proxies for throughput may not yield the expected results (demonstrated with bisection bandwidth in HyperX). 
	\item All networks except fat trees have graceful degradation in throughput under non-uniform TMs when the fraction of large flows increases. Poor performance of fat trees is due to heavy load at the ToRs.
	\item Randomizing the placement of heavy flows could improve the performance for throughput-intensive applications.

\end{itemize}

%



\section{Comparison with Related Work}
\label{sec:related}

The literature on network topology design is large and growing, with a number of designs having been proposed in the past few years~\cite{fattree-new, vl2, BCube, portland, dcell, cthrough, helios, proteus, jellyfish, legup, rewire}. However, each of these research proposals only makes a comparison with one or two other past proposals, with no standard benchmarks for the comparison. There has been no rigorous evaluation of a large number of topologies. 

Popa et.~al.~\cite{lucian10} assess $4$ topologies to determine the one that incurs least expense while achieving a target level of performance under a specific workload (all-to-all traffic). Their attempts to equalize performance required careful calibration, and approximations still had to be made. Accounting for the different costs of building different topologies is an inexact process. We sidestep that issue by using the random graph as a normalizer: instead of attempting to match performance, for each topology, we build a random graph with \SC{\emph{identical equipment}}, and then compare throughput performance of the topology with that of the random graph. This alleviates the need for equalizing configurations, thereby making it easy for others to use our tools, and to test arbitrary workloads.


\pbgnewer{Yuan et al.~\cite{HPC13} compared the performance of fat trees and Jellyfish with HPC traffic and a specific routing scheme, concluding they have similar performance, which disagrees with our results.  We replicated the method of~\cite{HPC13} and were able to reproduce their result of similar throughput using the A2A TM (Fig.~\ref{fig:hpc}, Comparison 1).  Their method splits flows into sub-flows and routes them along certain paths according to a routing scheme called LLSKR.  It then estimates each subflow's throughput by counting and inverting the maximum number of intersecting subflows at a link along the path.  Our LP-based exact throughput computation with the same LLSKR path restrictions yields slightly higher throughput in fat trees and substantially higher in Jellyfish, now $30\%$ greater than fat trees (Fig.~\ref{fig:hpc}, Comparison 2).  This is even though we now maximize the \emph{minimum} flow, while~\cite{HPC13} measured the average.  A second issue with the comparison in~\cite{HPC13} is that more servers are added to Jellyfish ($160$ servers) than to fat tree ($128$ servers) with $80$ switches in each topology. Equalizing all equipment, with $80$ switches and $128$ servers in both topologies, increases the performance gap to $65\%$ (Comparison 3).}

\pbgnewer{In a later work, Yuan et. al.~\cite{LFTI} compared performance of multiple topologies.  The throughput calculation was similar to~\cite{HPC13} discussed above so it may differ from an exact LP-based method.  In addition,~\cite{LFTI} uses single-path routing.  As pointed out in~\cite{LFTI}, single-path routing can perform significantly differently than multipath.  Moreover, in modern enterprise private cloud data centers as well as hyperscale public clouds, multipath routing (most commonly ECMP) is standard best practice~\cite{bgpInDC, jupiter, vl2} in order to achieve acceptable performance.  In contrast to~\cite{LFTI}, our study measures performance with the best possible (multipath) flow routing, so that our results reveal the topology's limits rather than a routing scheme's limits.  Our LP-based method could easily be extended to incorporate path restrictions if these were desired (as we did in the previous paragraph).  In addition, we study worst-case and real-world measured TMs.}

Other work on comparing topologies is more focused on reliability and cuts in the topology~\cite{sandiaPaper}. For instance, REWIRE~\cite{rewire} optimizes its topology designs for high sparsest cut, although it refers to the standard sparsest cut metric as bisection bandwidth. Tomic~\cite{ANCS} builds topologies with the objective of maximizing bisection bandwidth (in a particular class of graphs). Webb et. al~\cite{webb2011topology} use bisection bandwidth to pick virtual topologies over the physical topology. An interesting point of note is that they consider all-to-all traffic ``a worst-case communication scenario'', while our results (Figure~\ref{fig:compareTMsAll}) show that other traffic patterns can be significantly worse. PAST~\cite{stephens2012past} tests $3$ data center network proposals with the same sparsest cut (while referring to it as bisection bandwidth). 
PAST finds that the throughput performance of topologies with the same sparsest cut is different in packet-level simulations, raising questions about the usefulness of such a comparison; one must either build topologies of the same cost and compare them on throughput (as we do), or build topologies with the same performance and compare cost (as Popa et. al~\cite{lucian10} do). 
These observations underscore the community's lack of clarity on the relationship between cut-metrics and throughput. A significant component of our work tackles this subject.

\section{Conclusion}
\label{sec:conclusion}


Although high throughput is a core goal of network design, measuring throughput is a subtle problem. We have put forward an improved benchmark for throughput, including a near-worst-case traffic matrix, and have used this metric to compare a large number of topologies.  When designing networks for environments where traffic patterns may be unknown or variable and performance requirements are stringent, we believe evaluation under the near-worst-case longest matching TM will be useful. We also evaluate real-world workloads and show that randomizing the traffic matrix can lead to significant improvement in performance of non-uniform workloads.

Our findings also raise interesting directions for future work.  First, our longest-matching traffic heuristically produces near-worst case performance, but does not always match the lower bound.  Is there an efficient method to produce even-worse-case traffic for any given topology, or provably approximate it within less than $2\times$? Second, can we leverage the result that rack-level randomization of workload placement can improve performance to provide better task placement in data centers, taking into account other practical restrictions on task placement?




{
\bibliographystyle{abbrv}
{
\bibliography{paper}
}
}

{\normalfont\bfseries\MakeUppercase { Appendix}}

\label{appendix}

{\normalfont\bfseries\MakeUppercase { A. Proof of Theorem 1}}

We revisit the \emph{maximum concurrent flow} problem, based on which we defined throughput in \S\ref{sec:Metrics}: Given a network $G=(V,E_G)$ with capacities $c(u,v)$ for every edge $(u,v)\in E_G$, and a collection (not necessarily disjoint) of pairs $(s_i,t_i), i=1,\ldots,k$ each having a unit flow demand, we are interested in maximizing the minimum flow. Instead of the traffic matrix (TM) formulation of \S\ref{sec:Metrics}, for the following discussion, it will be convenient to think of the pairs of vertices that require flow between them as defining a demand graph, $H=(V,E_H)$. Thus, given $G$ and $H$, we want the maximum throughput. As we noted in \S\ref{sec:Metrics}, this problem can be formulated as a standard linear program, and is thus computable in polynomial time.

We are interested in comparing our suggested throughput metric with sparsest cut. We first prove the following theorem.

\begin{theorem}
The dual of the linear program for computing throughput is a linear programming relaxation for sparsest cut.
\end{theorem}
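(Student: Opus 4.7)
My plan is to write out the primal LP for maximum concurrent flow in path-flow form, take its dual, and then recognize the dual objective as a natural LP relaxation of the ratio that defines sparsest cut.

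I would start with the path-flow primal: for each demand pair $i$ with demand $d_i$, let $\mathcal{P}_i$ be the set of $s_i$-$t_i$ paths and let $f_P \ge 0$ be the flow along path $P \in \mathcal{P}_i$. We maximize $t$ subject to the capacity constraints $\sum_{P \ni e} f_P \le c(e)$ for every edge $e$, and the demand constraints $\sum_{P \in \mathcal{P}_i} f_P \ge t\, d_i$ for every pair $i$. Introducing dual variables $y(e) \ge 0$ on capacity constraints and $z_i \ge 0$ on demand constraints, the dual comes out to be: minimize $\sum_e c(e)\, y(e)$ subject to $\sum_{e \in P} y(e) \ge z_i$ for every pair $i$ and every $P \in \mathcal{P}_i$, and $\sum_i d_i\, z_i \ge 1$. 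The per-path constraint is exactly the statement that $z_i$ is at most the shortest-path length from $s_i$ to $t_i$ under the edge metric $y$, and at optimum we can take $z_i = \mathrm{dist}_y(s_i,t_i)$.

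The next step is to rescale. Because the constraint $\sum_i d_i z_i \ge 1$ is homogeneous with the objective under positive scaling of $(y,z)$, the optimum value equals
\[
\min_{y \ge 0}\ \frac{\sum_{e} c(e)\, y(e)}{\sum_i d_i\, \mathrm{dist}_y(s_i,t_i)}.
\]
Now I would compare this to sparsest cut. For any cut $S \subseteq V$, define the $0/1$ edge weighting $y_S(e) = \mathbf{1}[e \in \delta(S)]$; then $\mathrm{dist}_{y_S}(s_i,t_i) = 1$ exactly when $S$ separates the pair $(s_i,t_i)$ and $0$ otherwise, and $\sum_e c(e) y_S(e) = c(\delta(S))$. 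Substituting into the ratio above yields precisely $c(\delta(S)) / \sum_{i: (s_i,t_i)\in \delta(S)} d_i$, which is the sparsity of $S$. Hence restricting the feasible set of the dual to cut metrics $\{y_S : S \subseteq V\}$ recovers the sparsest cut objective, while the dual itself allows arbitrary non-negative edge weightings; this is exactly the sense in which the dual is an LP relaxation of sparsest cut.

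The conceptually trickiest step, and the one I would spend the most care on, is handling the ratio objective: the raw dual is a standard LP but sparsest cut is defined as a minimum ratio, so one must justify the scaling trick that converts the dual's constraint $\sum_i d_i z_i \ge 1$ plus linear objective into a ratio minimization. After that, the identification of cut metrics as a special class of edge weights is routine, and LP duality immediately yields the expected inequality throughput $=$ primal $=$ dual $\le$ sparsest cut, recovering the direction of the approximate max-flow/min-cut relationship alluded to earlier in the paper.
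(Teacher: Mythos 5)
Your proof is correct and follows essentially the same route as the paper's: write the maximum concurrent flow LP in path-flow form, dualize, identify the dual variables on demand pairs with shortest-path distances under the edge weights, use the homogeneity/normalization argument to convert to a ratio objective, and observe that restricting the edge weighting to cut (indicator) metrics recovers the sparsest-cut ratio. The only cosmetic differences are that you carry general demands $d_i$ where the paper uses unit demands on a demand graph, and you substitute cut indicators explicitly where the paper phrases the restriction as optimizing over $\ell_1$ metrics; these are equivalent.
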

\begin{proof}
We shall use a formulation of the throughput linear program that involves an exponential number of variables but for which is easier to derive the dual. We denote by $P_{s,t}$ the set of all paths from $s$ to $t$ in $G$ and we introduce a variable $x_p$ for each path $p\in P_{s,t}$, for each $(s,t)\in E_H$, corresponding to how many units of flow from $s$ to $t$ are routed through path $p$.

\begin{align*}
&\text{max} &&y&& \\
&\text{subject to}&& \sum_{p\in P_{s,t}} x_p\geq y && \forall (s,t)\in E_H, \\
&&&\sum_{p:(u,v)\in p}x_p\leq c(u,v) && \forall (u,v)\in E_G \\
&&&x_p\geq 0 && \forall p\\
&&&y\geq 0.
\end{align*}

The dual of the above linear program will have one variable $w(s,t)$ for each $(s,t)\in E_H$ and one variable $z(u,v)$ for each $(u,v)\in E_G$.

\begin{align*}
&\text{min} &&\sum_{u,v}z(u,v)c(u,v)&& \\
&\text{subject to}&& \sum_{(s,t)\in E_H} w(s,t)\geq 1 \\
&&&\sum_{(u,v)\in p}z(u,v)\geq w(s,t) && \forall (s,t)\in E_H, p\in P_{s,t} \\
&&&w(s,t)\geq 0 && \forall (s,t)\in E_H\\
&&&z(u,v)\geq 0 && \forall (u,v) \in E_G.
\end{align*}

It is not hard to realize that in an optimal solution, without loss of generality, $w(s,t)$ is the length of the shortest path from $s$ to $t$ in the graph weighted by the $z(u,v)$. We can also observe that in an optimal solution we have $\sum w(s,t) =1$. These remarks imply that the above dual is equivalent to the following program, where we introduce a variable $l(x,y)$ for every pair or vertices in $E_G\cup E_H$.

\begin{align*}
&\text{min} &&\sum_{u,v}l(u,v)c(u,v)&& \\
&\text{subject to}&& \sum_{(s,t)\in E_H} l(s,t)= 1 \\
&&&\sum_{(u,v)\in p}l(u,v)\geq l(s,t) && \forall (s,t)\in E_H, p\in P_{s,t} \\
&&&l(u,v)\geq 0 && \forall (u,v)\in E_G\cup E_H\\
\end{align*}

The constraints $\sum_{(u,v)\in p}l(u,v)\geq l(s,t)$ can be equivalently restated as triangle inequalities. This means that we require $l(u,v)$ to be a metric over $V$. These observations give us one more alternative formulation:

\begin{equation}\label{eq:metrics}
\text{min}_{l(\cdot,\cdot) \text{ metric}}\frac{\sum_{(u,v)\in E_G}c(u,v)\cdot l(u,v)}{\sum_{(s,t)\in E_H}l(s,t)}
\end{equation}

We can finally see that the above formulation is a linear programming relaxation for a cut problem. More specifically, the sparsest cut problem is asking to find a cut $S$ that minimizes the ratio
\begin{equation}\label{eq:cuts}
\frac{\sum_{(u,v)\in E_G \text{ cut by S}}c(u,v)}{|\text{ edges} \in E_H \text{ cut by S}|}
\end{equation}

This is equivalent to minimizing ratio (\ref{eq:metrics}) over $\ell_1$ metrics only.

If we take $E_H$ to be the complete graph (corresponding to all-to-all demands), we get the standard sparsest cut definition:
\begin{equation}\label{eq:spcut}
\frac{\sum_{(u,v)\in E_G \text{ cut by S}}c(u,v)}{|S||\bar{S}|}
\end{equation}
\end{proof}

Before we prove Theorem~\ref{thm:graphB} from \S\ref{sec:Metrics}, we shall demonstrate the following claim.

\begin{claim}
If $G$ is a $d$-regular expander graph on $N$ nodes and $H$ is the complete graph, the value of the linear program for throughput is $\O(\frac{d \log d}{N \log N})$. The value of the sparsest cut is $\Omega(\frac{d}{N})$.
\end{claim}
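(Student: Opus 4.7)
The plan is to prove the two bounds by separate, elementary arguments: a ``volumetric'' (total flow vs.\ total capacity) upper bound for throughput, and a direct expansion argument for sparsest cut.

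For the throughput upper bound, I would start from the LP given just above the claim and observe that any feasible solution routes at least $y$ units of flow between every pair $(s,t) \in E_H$, and such flow necessarily traverses at least $\mathrm{dist}_G(s,t)$ edges. Summing the capacity consumed per edge on both sides gives the key inequality
\begin{equation*}
y \sum_{s \neq t} \mathrm{dist}_G(s,t) \;\le\; \sum_{(u,v) \in E_G} c(u,v) \;=\; \frac{Nd}{2},
\end{equation*}
since $G$ is $d$-regular with unit edge capacities. It remains to lower bound the total pairwise distance.

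Next I would lower bound distances in $G$ by a standard ball-packing argument. Since $G$ has maximum degree $d$, the ball of radius $r$ around any vertex contains at most $1 + d + d(d-1) + \cdots + d(d-1)^{r-1} \le d^{r+1}$ nodes. Choosing $r = c \log N / \log d$ for a suitably small constant $c$ makes this ball contain at most $N/2$ nodes, so for every vertex $s$ at least half of the remaining vertices lie at distance at least $r$. Hence $\sum_{s \neq t} \mathrm{dist}_G(s,t) = \Omega(N^2 \log N / \log d)$, and substituting into the capacity inequality yields the claimed bound $y = O(d \log d / (N \log N))$.

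For the sparsest cut lower bound, I would invoke the defining property of a $d$-regular expander: there is a constant $h = \Omega(d)$ such that $c(S, \bar S) \ge h \, |S|$ for every $S$ with $|S| \le N/2$. Since $H$ is complete, the demand separated by $S$ is exactly $|S|\,|\bar S|$, so
\begin{equation*}
\phi(S) \;=\; \frac{c(S, \bar S)}{|S|\,|\bar S|} \;\ge\; \frac{h}{|\bar S|} \;\ge\; \frac{h}{N} \;=\; \Omega\!\left(\frac{d}{N}\right),
\end{equation*}
and taking the minimum over $S$ (with the symmetric case $|S| > N/2$ handled by swapping $S$ and $\bar S$) gives the claim.

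The main obstacle is really only the average-distance lower bound: one must convert the diameter-type estimate coming from the ball-packing argument into a bound on the total pairwise distance. The key observation is that it is enough to know a \emph{constant fraction} of vertices are far from any given $s$, which is exactly what the ball estimate provides; from there everything is bookkeeping of constants. The sparsest-cut half is immediate from edge expansion plus the fact that $H$ is complete, so no significant work is required there.
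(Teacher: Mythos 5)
Your proposal is correct and follows essentially the same route as the paper: the sparsest-cut half is the identical edge-expansion computation, and your throughput bound rests on the same key lemma, namely that the total pairwise distance in a degree-$d$ graph is $\Omega(N^2 \log N/\log d)$ by ball growth. The only difference is presentational: you derive the inequality $y\sum_{s\neq t}\mathrm{dist}_G(s,t)\le \frac{Nd}{2}$ directly from the primal by counting capacity consumed per unit of routed flow, whereas the paper obtains the same inequality by plugging the shortest-path metric into the dual LP (its metric relaxation of sparsest cut); the two are equivalent by weak duality, and your version is arguably the more elementary reading of the same bound.
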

\begin{proof}
Let us denote by $T$ the optimal value of expression (\ref{eq:metrics}). Note that this is the optimal value of the dual for the linear program for throughput, therefore equal to the optimal throughput. By taking $l(\cdot,cdot)$ to be the shortest path metric on $G$, we calculate:

\begin{equation}\label{eqn:tpexpander}
T\leq \frac{\sum_{(i,j)\in E_G}l(i,j)}{\sum_{i,j \in V}l(i,j)}\leq \frac{d/2. |V|}{\Theta(N^2 \frac{\log N}{\log d})}\leq O(\frac{d\log d}{N^2 \log N})
\end{equation}

Here, the first inequality follows from the fact that for $d$-regular graphs, each node can reach no more than $d^i$ nodes in $i$ hops. This means that given a vertex $v$, there exist $\Theta(N)$ nodes with distance at least $\frac{\log N}{\log d}$ from it, which means that the total distance between all pairs of nodes is $\Theta(N^2 \frac{\log N}{\log d})$.\\

Let $\Phi$ denote the minimum value of ratio (\ref{eq:spcut}) for $G$. Since $G$ is an expander, this ratio is

\begin{equation}\label{eqn:cutexpander}
\Phi \geq \Omega(\frac{d\cdot |S|}{|S||V-S|})=\Omega(\frac{d}{N})
\end{equation}
\end{proof}

\noindent {\bf Theorem \ref{thm:graphB}.} {\em Let graph $G$ be any $2d$-regular expander on $N=\frac{n}{dp}$ nodes, where $d$ is a constant and $p$ is a free parameter. Let graph B be constructed by replacing each edge of $G$ with a path of length $p$. Then, $B$ has throughput $T_B = O(\frac{1}{np{\log n}})$ and sparsest cut $\Phi_B=\Omega(\frac{1}{np})$.}
\begin{proof}
Let $(S_1,S_2)$ be the sparsest cut in $B$. Let $({S_1}',{S_2}')$ be the corresponding cut in $G$. Namely, if an edge was cut in $B$ by $(S_1,S_2)$ that belonged to a path $p_e$ then $({S_1}',{S_2}')$ cuts $e$. Let $\Phi_B$ be the value of the cut $(S_1,S_2)$ in $A$ and $\Phi_G$ the value of $({S_1}',{S_2}')$ in $G$. Then

$$\Phi_B=\frac{E(S_1,S_2)}{|S_1||S_2|}=\frac{E({S_1}',{S_2}')}{|S_1||S_2|}\geq \frac{E({S_1}',{S_2}')}{p\cdot |S_1'|p\cdot |S_2'|}\geq \frac{\Phi_G}{p^2}$$
by equation (\ref{eqn:cutexpander}) we have $\Phi_G \geq \Omega(\frac{1}{N})=\Omega(\frac{p}{n})$ which gives us $$\Phi_B\geq \Omega(\frac{1}{np})$$

On the other hand, let $T_B$ be the value of the throughput of $B$. We follow a similar reasoning as we did in equation (\ref{eqn:tpexpander}).
\begin{equation}
\begin{split}
T_B\leq \frac{\sum_{(i,j)\in E_G}l(i,j)}{\sum_{i,j \in V}l(i,j)}\leq \frac{Ndp}{\Theta((Np)^2 p\log N)} \\
\leq O(\frac{1}{Np^2\log N})=O(\frac{1}{np\log n})
\end{split}
\end{equation}
\end{proof}

\label{sec:thm2proof}

{\normalfont\bfseries\MakeUppercase { B. Proof of Theorem 2}}

\begin{proof}
$T_{A2A}$ has demand $\frac{1}{n}$ on each flow, so the largest feasible multicommodity flow routing of $T_{A2A}$ in $G$ has capacity $\frac{t}{n}$ allocated to each flow. Let $C$ be a graph representing this routing, i.e., a complete digraph with capacity $\frac{t}{n}$ on each link.  Systems-oriented readers may find it useful to think of $C$ as an overlay network implemented with reserved bandwidth in $G$.  In other words, to prove the theorem, it is sufficient to show that taking $T$ to be any hose-model traffic matrix, $T\cdot t/2$ is feasible in $C$.

For this, we use a two-hop routing scheme analogous to Valiant load balancing~\cite{vlbopt}. Consider any traffic demand $v\leadsto w$. In the first step, we split this demand flow into $n$ equal parts, routing flow $\frac{1}{n}\cdot T(v,w)\cdot t/2$ from $v$ to every node in the network, along the direct links (or the zero-hop path when the target is $v$ itself).  In the second step, the traffic arriving at each node is sent along at most one link to its final destination.

We now have to show that this routing is feasible in $C$.  Consider any link $i\to j$. This link will carry a fraction $\frac{1}{n}$ of all the traffic originated by $i$, and a fraction $\frac{1}{n}$ of all the traffic destined to $j$.  Because $T$ is a hose model TM, each node originates and sinks a total of $\leq 1$ unit of traffic; and since we are actually attempting to route the scaled traffic matrix $T\cdot t/2$, each node originates and sinks a total of $\leq t/2$ units of traffic.  Therefore, link $i$ carries a total of \[\frac{t}{2} \cdot \frac{1}{n} + \frac{t}{2} \cdot \frac{1}{n} = \frac{t}{n},\]
which is the available capacity on each link of $C$ and is hence feasible.
\end{proof}

\label{sec:cutEval}

{\normalfont\bfseries\MakeUppercase { C. Measuring cuts}}

We employ several heuristics for estimating sparsest cut. 

\paragraphb{Brute-force computation} Brute force computation of sparsest cut is computationally intensive since it considers all possible cuts in the network ($2^{n-1}$ cuts in a network with $n$ nodes). In addition to bandwidth, the number of flows traversing each cut has to be estimated which adds further overhead in the computation of sparsest cut.

Due to the computational complexity, brute force evaluation of sparsest cut is possible only for networks of size less than $20$. However, we perform limited brute-force computation on all networks by capping the computation at $10,000$ cuts.

\paragraphb{One-node cuts} Designed computer networks as well as naturally occurring networks tend to be denser at the core and sparse at the edges. When the core has high capacity, it is likely that the worst-case cut occurs at the edges. Hence, this heuristic considers all cuts with only a single node in a subset formed by the cut. There exists $n$ cuts with a single node. This is a very small fraction of the total $2^{n-1}$ cuts.

\paragraphb{Two-node cuts} $\frac{n * (n-1)}{2}$ cuts with two nodes in a subset also reveal the limited connectivity at the edges of the network.

 \begin{table*}[h]

\small
\centering
  \begin{tabular}{ | l | c | c | c | c | c | c | c | c |}
    \hline
    & &  & \multicolumn{5}{c|}{Sparsest cut estimator which found the worst cut} \\ 
    Topology family & Total & \parbox{2cm}{\#networks with throughput= estimated cut} & Brute force & 1-node & 2-node & \parbox{1.5cm}{Expanding regions} & Eigenvector \\ \hline
    BCube & 7 & 2 & 2 & 0 & 0 & 3 & 7  \\ \hline
    DCell & 4 & 2 & 2 & 0 & 0 & 2 & 3 \\ \hline
	Dragonfly & 4 & 0 & 2 & 0 & 0 & 0 & 2 \\ \hline
	Fat tree & 8 & 8 & 8 & 8 & 8 & 8 & 8 \\ \hline 
	Flattened butterfly & 8 & 5 & 6 & 0 & 1 & 0 & 5 \\ \hline   	
    	Hypercube & 7 & 3 & 3 & 0 & 0 & 1 & 6 \\ \hline
    	HyperX & 11 & 1 & 1 & 0 & 0 & 1 & 10 \\ \hline
	Jellyfish & 350 & 3 & 0 & 0 & 0 & 2 & 349\\ \hline
	LongHop & 110 & 9 & 45 & 0 & 0 & 1 & 66\\ \hline
	SlimFly & 6 & 1 & 1 & 0 & 0 & 0 & 5\\ \hline \hline
	Natural networks & 66 & 48 & 18 & 21 & 11 & 34 & 38 \\ \hline \hline
	Total & 581 & 82 & 88 & 29 & 20 & 52 & 499  \\ \hline
    \end{tabular}
    \caption{Estimated sparsest cuts: Do they match throughput, and which estimators produced those cuts?}
    \label{table:cuts}

\end{table*}

\paragraphb{Expanding cuts} It is likely that the network is clustered,i.e., it contains two or more highly connected components connected by a few links. Subsets of all possible combinations of contiguous nodes in the network can be very large. We optimize our search to a subspace of this category of cuts. Starting from each node, we consider cuts which include nodes within a distance $k$ from the node. When $k=0$, the cut involves only the originating node and is equivalent to the single node case discussed before. When $k=1$, all nodes within distance $1$ from the node are considered -- the node and its neighbors. $k$ is incremented until the entire graph is covered. If $d$ is the diameter of the network, the number of cuts considered is less than or equal to $n * d$.

\paragraphb{Eigenvector based optimizations} Eigenvector corresponding to the second smallest eigenvalue of the normalized Laplacian of a graph can give a set of $n$ cuts, the worst of which is within a constant factor from the actual cut~\cite{cheegerIneq}. The nodes of the graph are sorted in the ascending order corresponding to their value in the second eigenvector~\cite{cheegerIneq}. We sweep this vector of sorted nodes to obtain the $n$ cuts.

How well did our sparse cut heuristics perform? Comparing columns 2 and 3 in Table~\ref{table:cuts}, we see that cuts accurately predicted throughput in less than $15\%$ of the tested networks only. Table~\ref{table:cuts} shows how often each estimator found the sparse cut.  More than one technique may find the sparse cut, hence the sum may not equal the total number of networks.  Brute-force computation was helpful in finding $15\%$ of the sparse cuts. Cuts involving one or two nodes and contiguous regions of the graph also found the sparse cut in a small fraction of networks (less than $10\%$ each). The majority of such networks are the natural networks, which are \pbgnew{often} denser in the core and sparser in the edges. Sparse connectivity at the edges lead to bottlenecks at the edge which are revealed by cuts involving one or two nodes. Fat tree is another interesting \pbgnew{case where every heuristic's cuts yield} the accurate flow value. Overall, the eigenvector-based approximation found the largest number of sparse cuts ($86\%$), but it is known not to be a tight approximation~\cite{cheegerIneq}, and the full collection of heuristics did improve on it in a nontrivial fraction of cases.

\end{document}